

\documentclass[11pt,draftcls,onecolumn]{IEEEtran} 





\usepackage{cite}      

\usepackage{graphicx}  

\usepackage{psfrag}    

\usepackage{subfigure} 

\usepackage{url}       

\usepackage{stfloats}  

\usepackage{amsmath}   
\interdisplaylinepenalty=2500





\usepackage{amsmath,amssymb,graphicx,url,color,bbm, amsthm,amsfonts,dsfont,array,psfrag,stmaryrd,wasysym}
\usepackage{enumerate}
\usepackage[ruled,vlined]{algorithm2e}

\DeclareMathOperator{\supp}{supp}
\DeclareMathOperator{\rank}{rank}
\DeclareMathOperator{\prox}{prox}

\DeclareMathOperator{\diag}{diag}

\DeclareMathOperator*{\argmin}{{arg\,min}\ }

\makeatletter
\DeclareRobustCommand\onedot{\futurelet\@let@token\@onedot}
\def\@onedot{\ifx\@let@token.\else.\null\fi\xspace}

\def\eg{\emph{e.g}\onedot} 
\def\ie{\emph{i.e}\onedot}

\def\etal{\emph{et al}\onedot}

\def\card{\mathbf{Card}}
\def\prox{\textit{prox}}

\def\mhat{\widehat m}

\def\tcal{\mathcal T}
\def\acal{\mathcal A}
\def\atilde{\widetilde A}
\def\abf{\mathbf H}
\def\sbf{\mathbf S}

\def\sp{\,\,}

\def\mhat{\widetilde m}

\def\tcal{\mathcal T}
\def\acal{\mathcal A}
\def\atilde{\widetilde A}
\def\Hbar{\overline H}

\def\lm{\ell_{2}/\ell_{1}}

\def\usm[#1]{\atilde_{#1}}

\def\O{\mathcal O}

\def\na{n_{1}}
\def\nb{n_{2}}
\def\lm{\ell_{2}/\ell_{1}}

\newcommand{\defeq}{\triangleq}

\newcommand{\Rbb}{\mathbb{R}}
\newcommand{\Nbb}{\mathbb{N}}


\newcommand\LRstMixNormPPXA{\texttt{Nuclear-$\ell_{2}/\ell_{1}$}}
\newcommand\LRJS{\LRstMixNormPPXA}
%
%


\newtheorem{theorem}{Theorem}
\newtheorem{corollary}{Corollary}

\newtheorem{definition}{Definition}

\newtheorem{lemma}{Lemma}
\newtheorem{lemma*}{Lemma}




\newlength \myfigwidth
\newlength \myfigwidthll
\newlength \myfigwidthllI

\newlength \descriptionwidth

\if@twocolumn
  \setlength \myfigwidth {0.9\columnwidth}
\else
  \setlength \myfigwidth {0.5\textwidth}
\fi
\setlength \myfigwidthll {0.75 \myfigwidth}
\setlength \myfigwidthllI {0.15 \textwidth}

\if@twocolumn
  \setlength \descriptionwidth {12.5cm}
\else
   \setlength \descriptionwidth {0.8 \textwidth}
\fi






\if@twocolumn
\else
\addtolength{\tabcolsep}{+0pt} 
\fi

\begin{document}
%
\title{\LARGE{Compressed Sensing of Simultaneous Low-Rank and Joint-Sparse Matrices}}
%
%
\author{
Mohammad~Golbabaee,~\IEEEmembership{Student Member~IEEE},~and~Pierre~Vandergheynst
\thanks{The authors are with the Signal Processing Laboratory LTS2, Electrical Engineering Department, \'Ecole Polytechnique F\'ed\'erale de Lausanne (EPFL), Station 11, CH-1015 Lausanne, Switzerland.
This work was supported in part by the EU FET program through projects SMALL (FET-225913) and UNLocX (FET-255931), and the Swiss National Science Foundation under grant 200021-117884.
\protect\\
E-mail:\{mohammad.golbabaei,pierre.vandergheynst\}@epfl.ch. }}
\maketitle

\begin{abstract}

In this paper we consider the problem of recovering a high dimensional data matrix from a set of incomplete and noisy linear measurements. We introduce a new model 
that can efficiently restrict the degrees of freedom of the problem and is generic enough to find a lot of applications, for instance in multichannel signal compressed sensing (\eg sensor networks, hyperspectral imaging) and compressive sparse principal component analysis (s-PCA). We assume data matrices have a \emph{simultaneous low-rank and joint sparse} structure, and we propose a novel approach for efficient compressed sensing (CS) of such data.
 Our CS recovery approach is based on a convex minimization problem that incorporates this restrictive structure by jointly regularizing the solutions with their nuclear (trace) norm and $\lm$ mixed norm.
 Our theoretical analysis uses a new notion of restricted isometry property (RIP) and 
 shows that, for sampling schemes satisfying RIP, our approach can stably recover all low-rank and joint-sparse matrices. For a certain class of random sampling schemes satisfying a particular concentration bound (\eg the subgaussian ensembles) we derive a lower bound on the number of CS measurements indicating the near-optimality of our recovery approach as well as a significant enhancement 
 compared to the state-of-the-art.
 We introduce an iterative algorithm based on proximal calculus in order to solve the joint nuclear and $\lm$ norms minimization problem and, finally, we illustrate the empirical recovery phase transition of this approach by series of numerical experiments.

\begin{IEEEkeywords}
Compressed sensing, joint-sparsity, low-rank matrices, $\lm$ mixed norm, nuclear (trace) norm, convex optimization, proximal splitting method.
\end{IEEEkeywords}
\end{abstract}
\section{Introduction}

Suppose we are given an $\na \times \nb$ matrix $X$ which is \emph{joint-sparse}, 
that is only $k\ll \na$ rows of this matrix contain nonzero entries (a similar argument holds for joint-column-sparsity), and in the meantime the underlying submatrix of nonzero rows (and equivalently the whole matrix $X$) has a \emph{low rank} $r \ll \min(k, \nb)$. Such matrices have few degrees of freedom: if one knows the indices of those $k$ nonzero rows, the corresponding submatrix will only have $(k+\nb-r)r$ degrees of freedom. 
Following the recent trend in the areas of compressed sensing \cite{donoho2006compressed, CRT-stable2005, candes2006compressive}, sparse approximation \cite{mallat1993matching,Chen98atomic} and low-rank matrix recovery \cite{lowrank, Matirxcompletion, NoisyMC, optspace}, we would like to know how many non-adaptive linear measurements of this matrix are sufficient to robustly represent the whole data and, more importantly, how this compares  to the state-of-the-art. 
%
%
The following questions are therefore natural:
\begin{itemize}
\item What are the efficient information/structure-preserving measurements? 
\item Is there any  computationally tractable algorithm that incorporates the simultaneous low-rank and joint-sparse structures into data recovery and does it perform better than the state-of-the-art?
\item And finally, how many measurements do we need to recover an exact low-rank and sparse matrix, and is the recovery algorithm \emph{stable} with respect to noisy measurements and matrices that are approximately low-rank or not exactly joint-sparse but \emph{compressible}? 
\end{itemize}
This paper attempts to answer these questions.
%
In this regard, we introduce a novel approach for CS reconstruction of simultaneous low-rank and joint-sparse matrices. Our approach is based on a convex problem that simultaneously uses low-rank and joint-sparsity promoting regularizations, namely, the \emph{nuclear} (\emph{trace}) norm and the $\lm$ mixed-norm of the solution matrix.
As a result, 
the solution of this problem tends to have both desired structures. Further, we propose an efficient algorithm in order to solve this optimization problem. 

Additionally, we establish a theoretical upper bound on the reconstruction error guaranteeing the \emph{stability} of our recovery approach against noisy measurements, \emph{non-exact} sparse and \emph{approximately} low-rank data matrices. We prove that, if the sampling scheme satisfies a \emph{specific} restricted isometry property (RIP), the proposed approach can stably recover \emph{all} joint-sparse and low-rank matrices. In particular and for a certain class of \emph{random} sampling schemes, we prove that this property implies the  following lower bound on the number $m$ of CS measurements:
\begin{eqnarray}\label{our bound}
m \gtrsim \mathcal{O} \big( k  \log(n_{1}/k) + k r  + n_{2} r \big).
\end{eqnarray}
Compared to the necessary condition for recovery of such structured matrices,  \ie $m>(k+\nb-r)r $, this measurement bound highlights the near-optimality of our recovery approach: its scaling order with respect to $k$ and $\nb$ has only an extra $\log$ factor due to the joint-sparsity pattern subset selection.

\subsection{Application in Multichannel Signal CS}\label{sec:app1}
Multichannel signals are undoubtedly a perfect fit for the new compressed sensing paradigm: on one hand, they have numerous applications \eg in computer vision, biomedical data, sensor networks, multispectral imaging and audio processing, just to name a few. 
On the other hand, the main present challenge is the tremendously large and increasing flow of data in those applications, which poses tight constraints on the available technologies at any of the acquisition, compression, transmission and data analysis steps. 
Despite their huge dimensionality, multichannel data are often highly redundant as a result of two categories of correlations:  \emph{inter-} and \emph{intra-}channel correlations. Any efficient dimensionality reduction scheme requires comprehensive modeling of both types of correlations. A straightforward extension of the classical single-channel CS idea \cite{donoho2006compressed, CRT-stable2005} only considers inter-channel signal correlations through an \emph{unstructured} sparsity assumption on the signal in each channel and the standard recovery approaches such as  $\ell_{1}$ minimization result in limited improvement.

The simultaneous low-rank and joint-sparse model turns out to be an ideal candidate to more restrictively model multichannel data and the underlying structures. In this case, the matrix $X$ represents a multichannel signal where $\nb$ denotes the number of channels and $\na$ is the signal dimensionality per channel. This model not only considers inter-channel signal correlations leading to a sparse representation, but also takes into account intra-channel correlations and is able to assign a particular structure for the sparsity~: signals of different channels share a \emph{unique} dominant sparsity pattern (support) and in addition, the values of those sparse coefficients are \emph{linearly dependent} across the multiple channels. Beside its restrictiveness, this model is generic enough to hold for a wide range of applications where the entire multichannel signal can be decomposed into few sparse sources through a \emph{linear source mixture model} as follows:
\[
X = \sbf \abf^{T}, 
\]   
where $\sbf$  is an $\na \times \rho$ matrix whose columns contain $\rho\ll \min(\na,\nb)$ source signals that are linearly mixed with an $\nb \times \rho$ matrix $\abf$ in order to generate the whole data across multiple channels. Indeed, for a sparse (w.l.o.g. in the canonical basis) source matrix $\sbf$ and an arbitrary mixing matrix $\abf$, this factorization  implies a low-rank ($r\leq \rho$) and joint-sparse $X$.

We have discussed in our previous articles \cite{LRJS, LRJS-HSI-ICASSP12, LRTV-HSI-ICIP12} the application of this model in compressive hyperspectral imagery. The linear mixture model typically holds for this type of images where $\abf$ contains the spectral signatures of a few materials present in the region whose distributions are indicated by the source images, \ie the columns of $\sbf$. We have shown that leveraging both low-rank and sparse structures enables CS to achieve significantly low sampling rates: \emph{A hyperspectral image can be subsampled by retaining only 3\% of its original size and yet be robustly reconstructed.}

Alternatively, the same 
model can be  deployed for a much more efficient CS acquisition in sensor network applications and  it can eventually enhance the tradeoff between the number $\nb$ of sensors and the complexity (\eg the number of measurements per channel $m/\nb$) of each sensor.
Indeed, typical sensor network setups are
 monitoring highly correlated phenomena which are mainly generated by a small number of (sparse) sources distributed in a region, and $\abf$ represents a possibly unknown Green's function accounting for the physical source-to-sensor signal propagation model.




\subsection{Compressive Sparse PCA}
Despite similarity in names, this paper considers a different problem than the robust principal component analysis (r-PCA) \cite{RPCA, C-RPCA}. We are not interested in decomposing a data matrix  into separate low-rank $L$ and sparse $S$ components, \ie $X = L+S$. 
In contrast, here $X$ simultaneously have both structures and our problem is more related to  the \emph{sparse principal component analysis} (s-PCA) literature \cite{SPCA-Lasso, SPCA-daspremont, SPCA-Bach}. The s-PCA problem considers the following factorization of a data matrix:
\[X \approx UV^{T}\]
where, the columns of $V \in \Rbb^{\rho\times \nb}$  contain the few $\rho \ll \min(\na, \nb)$ principal components (PCs) of $X$ and $U\in \Rbb^{\na \times \rho}$ is a sparse matrix of the corresponding loadings of the principal components. The classical PCA \cite{PCA}, which is a key tool for multivariate data analysis and dimensionality reduction, numerically involves a singular value decomposition with orthogonal matrices $U$ and $V$. 
However, the classical approach comes with a key drawback: 
in practical setups with $\na$ variables  and $\nb$ observations, most of the loadings are nonzero and thus, the main factors explaining data variance are linear combinations of \emph{all} variables, making it often difficult to interpret the results. The idea of s-PCA has been mainly proposed to address this shortcoming at the cost of applying more complex algorithms and loosing orthogonality of $U$ and $V$. 

As previously described in Section \ref{sec:app1}, this factorization (\ie having a small number $\rho$ of PCs with a sparse loading matrix $U$) implies a low-rank and joint-sparse structure for $UV^{T}$.
Therefore an alternative approach to solve s-PCA is to first use our method to approximate $X$ by a simultaneous low-rank and joint-sparse matrix, and then apply the conventional PCA to decompose the result into $U$ and $V$.
This approach (\ie s-PCA in the compressive domain) could be particularly useful for large-dimensional data analysis and visualization where data size is the main bottleneck. In addition, and to the best of our knowledge, the theoretical analysis provided in this paper is the first in this direction.

\subsection{Paper Organization}
This paper is organized as follows: in Section \ref{ch3-sec:priorarts} we review the most related works in  the compressed sensing literature and discuss the efficiency of different data priors for CS recovery. Section \ref{ch3-sec:rec-approach} presents our novel CS reconstruction scheme for simultaneous low-rank and joint-sparse data matrices. 
In Section \ref{ch3-sec:theory} we establish the theoretical guarantees of this recovery approach. The proofs of the main theorems are provided in Section \ref{sec:proofs}.
 In Section \ref{sec: PPXA} we introduce a set of iterative algorithms based on the proximal splitting method in order to solve the convex recovery problem. Finally, in Section \ref{ch2-sec:expe} we show the empirical recovery phase transition behavior of our method for different CS acquisition schemes and we compare them to prior art. 
 

\subsection{Notations}

Throughout this paper we frequently use the following notations:

For the matrix $X\in \Rbb^{\na \times \nb}$, $X_{i,.}$ and $X_{.,j}$ respectively denote its $i$th row and $j$th column, and $[X]_{i,j}$ is the element of $X$ in the $i$th row and the $j$th column. Moreover, $X_{vec}$ denotes the vector formed by stacking the columns of $X$ below one another:
 \[
 X_{vec} = \begin{bmatrix}
   X_{.,1}\\
   X_{.,2}\\
   \vdots\\
   X_{.,\nb}
     \end{bmatrix}.
 \]

The vector space of matrices is equipped with an inner product defined for any pair of matrices $X,X' \in \Rbb^{\na \times \nb}$ as follows:
\[ 
\left\langle X,X' \right\rangle \defeq \textbf{Tr}(X^{T}X')=\sum_{i,j} [X]_{i,j}[X']_{i,j},
\]
where, $\textbf{Tr}(.)$ is the trace of a square matrix. Moreover, four matrix norms that we frequently use are:
\begin{enumerate}
\item The spectral norm $\|X\|$: the largest singular value of $X$.
 \item The nuclear norm $\|X\|_{*}\defeq \sum_{i} \sigma_{i}$ which is the sum of the singular values ($\sigma_{i}$) of $X$.
\item The Frobenius norm $\|X\|_{F} \defeq \sqrt {\left\langle X,X\right \rangle }  = \sqrt {\sum_{i,j} [X]_{i,j}^{2}} = \sqrt{\sum_{i} \sigma_{i}^{2}}$.
 \item  The $\lm$ mixed-norm $\|X\|_{2,1} \defeq \sum_{i} \sqrt {\sum_{j} [X]^{2}_{i,j}}$ that is the sum of the $\ell_{2}$ norms of the matrix rows. 
\end{enumerate}

Since $\|.\|_{*}$ is a norm, it is \emph{subadditive} \ie, $\|X+X'\|_{*} \leq \|X\|_{*}+\|X'\|_{*}$  and the equality (\emph{additivity}) holds if $X$ and $X'$ have row and column spaces that are orthogonal  $X^{T}X'=X(X')^{T}=\mathbf{0}$ \cite{Recht-Fazel2010}.

When $X$ is a rank $r$ matrix, we have the following inequalities:
\begin{eqnarray}\label{norms ineq}
\|X\| \leq \|X\|_{F} \leq \|X\|_{*}  \leq \sqrt{r} \|X\|_{F} \leq r \|X\|.
\end{eqnarray}

In addition, the support of  a vector $x \in \mathbb{R}^{n}$ is defined as the set containing the indices of the nonzero elements of $x$ \ie, 
\begin{eqnarray}\label{supp}
\supp(x) \defeq \Big\{i \in \{1\ldots n\} \,:\, x_{i} \neq 0\Big\}.
\end{eqnarray}
This definition can be extended to a matrix $X \in \Rbb^{\na\times \nb }$ as follows:
\begin{eqnarray}
\supp(X) \defeq \bigcup_{j} \supp(X_{.,j}),
\end{eqnarray}
which is the union of the supports of all of its columns. We call $X$ a $k$-\emph{joint-sparse} matrix when there are no more than $k$ rows of $X$ containing nonzero elements, $\mathbf{Card}(\supp(X)) \leq k$, where $\mathbf{Card}(.)$ denotes the cardinality of a set. Note that, if $X$ is $k$-joint-sparse the following inequalities hold:
\begin{eqnarray}\label{mix-fro}
 \|X\|_{F} \leq \|X\|_{2,1} \leq \sqrt{k} \|X\|_{F}.
\end{eqnarray}

Finally, throughout this paper we often use the notation $X_{\tcal}$ which is a matrix of the same size as $X$ whose rows indexed by the set $\tcal$ are the same as $X$ and zero elsewhere.

\section{Compressed Sensing}\label{ch3-sec:priorarts}

Compressed sensing \cite{donoho2006compressed, CRT-stable2005} has been introduced in the past decade as an alternative to Shannon's sampling theorem. The original foundations of this new sampling paradigm were built over taking advantage of the \emph{sparsity} of a wide range of natural signals in some properly-chosen orthonormal bases and representing large dimensional sparse vectors by very small number of linear measurements, mainly proportional to their sparsity levels.

\subsubsection{CS Acquisition Protocol}\label{ch3-sec:sampling}

 $m\ll\na\nb$ linear and possibly noisy measurements are collected from the data matrix  $X$ by defining a linear mapping  $\mathcal{A} \, : \, \mathbb{R}^{n_{1} \times n_{2}} \rightarrow \mathbb{R}^{m}$ and its adjoint $\mathcal{A^{*}} \, : \, \mathbb{R}^{m} \rightarrow \mathbb{R}^{n_{1} \times n_{2}}$ as follows:
\begin{eqnarray}\label{ch2-sampling model1}
y = \mathcal{A}(X)+z,
\end{eqnarray}
where $y \in \mathbb{R}^{m}$ is the vector containing the CS measurements and $z \in \mathbb{R}^{m}$ is the noise vector due to quantization or transmission errors. This formulation equivalently has the following explicit matrix representation:
\begin{equation}\label{sampling model2}
y = A X_{vec}+z,
\end{equation}
because any linear mapping $\acal(X)$ can be expressed in a matrix form $\acal(X)\defeq A X_{vec}$, where 
the inner product of the data matrix and each row of $A\in \Rbb^{m \times \na \nb}$ gives a noise-free element of the measurement vector.
In a similar way, we can express the adjoint operator as
\begin{equation*}
\acal^{*}(y) = \textsf{reshape}(A^{T}y,\na,\nb).
\end{equation*}
The $\textsf{reshape}$ operator applies on an $\na\nb$ dimensional vector $A^{T}y$ in order to reshape it into an $\na \times \nb$ matrix such that the first $\na$ elements fill the first column, the next $\na$ elements fill the second column and so on.

The main goal of CS is to robustly recover $X$ from the smallest number of measurements $m$. Despite the simplicity of using a linear sampling scheme and unlike Shannon's linear signal reconstruction, CS decoding applies more complex and nonlinear approximation methods. In the following we mention some of the most relevant approaches to our work:

\subsubsection{Unstructured Sparse Recovery by $\ell_{1}$ Minimization}

The following $\ell_{1}$ problem, also known as Basis Pursuit De-Noising (BPDN), was among the most successful solutions that have
 been initially proposed for CS recovery of a sparse vector $X_{vec}$:
 \begin{eqnarray}\label{ch2-BPDN}
\argmin_{X} \|X_{vec}\|_{\ell_{1}} \qquad \text{subject to}\quad \|y-\acal(X)\|_{\ell_{2}} \leq \varepsilon.
\end{eqnarray}
Note that regularizing the solutions by their
 $\ell_{1}$ norm, although known as the best sparsity-inducing convex functional, can only exploit \emph{unstructured} sparsity in the data.  It has been shown that if the sampling operator $\acal$ satisfies the \emph{restricted isometry property} (RIP), then the solution to  \eqref{ch2-BPDN} can stably recover the underlying data \cite{donoho2006compressed, CRT-stable2005}. For sampling matrices $A$  (or equivalently $\acal$ in \eqref{ch2-sampling model1}) whose elements are drawn independently at random from the Gaussian, Bernoulli or (in general) subgaussian distributions, this property implies the following lower bound on the number of measurements:
\begin{eqnarray}\label{BPDN bound}
m \gtrsim \O \left( k\nb \log(\na/k)\right),
\end{eqnarray}
which indicates a suboptimal performance compared to the limited degrees of freedom of a simultaneous low-rank and joint-sparse data matrix.

%
%
\subsubsection{Joint-Sparse Recovery by $\lm$ Minimization}
To overcome this limitation, some alternative recovery approaches propose to consider a restrictive structure on the sparsity pattern of the data, namely, a \emph{joint-sparsity} structure for the nonzero coefficients of $X$. 
Some approaches based on greedy heuristics \cite{TroppJsparse,duarte2005distributed, DistribuuedTH-SPIE} attempt to identify the joint-sparsity support of $X$, whereas other methods based on convex relaxation \cite{TroppJSconvex,HolgerEldar,Blocksparse} aim at recovering both the joint-sparsity pattern and the nonzero coefficients by penalizing other matrix norms promoting a joint-sparsity structure to their solutions. In particular, penalizing the $\lm$ mixed norm of the data matrix through the following convex problem is very popular and has been widely used in the literature:
 \begin{eqnarray}\label{ch2-l21}
\argmin_{X} \|X\|_{2,1} \qquad \text{subject to}\quad \|y-\acal(X)\|_{\ell_{2}} \leq \varepsilon.
\end{eqnarray}
Despite significant improvements in support recovery\footnote{It has been shown in \cite{DistribuuedTH-SPIE} that a simple one-stage p-Thresholding algorithm can correctly identify the support of an \emph{exact} joint-sparse matrix provided $m \gtrsim \O(k\log(\na))$.}, the joint-sparsity assumption fails to enhance the entire data reconstruction (\ie recovering both the support and the nonzero coefficients). Eldar and Mishali \cite{Blocksparse} extend  the notion of RIP to the \emph{Block}-RIP and show that, if the sampling matrix $A$ satisfies this property, the $\lm$ problem can stably reconstruct joint-sparse matrices. For a dense i.i.d. subgaussian sampling matrix $A$, the number of measurements sufficient for satisfying the Block-RIP is
\begin{eqnarray}\label{l21 bound}
m \gtrsim \O \left( k\log(\na/k) + k\nb \right).
\end{eqnarray}
As we can see, for a large number  $\nb \gg k\log(\na/k)$, this bound indicates only a minor enhancement of a $\log(\na/k)$ factor compared to the measurement bound \eqref{BPDN bound} corresponding to the classical $\ell_{1}$ problem. This is a fundamental limitation arising from  neglecting the underlying correlations among the values of nonzero coefficients \ie, with such a loose assumption (the joint-sparsity model) the data matrix can have $k\nb$ nonzero elements freely taking independent values, and thus generally one requires $\O(k)$ measurements per column for recovery.

\subsubsection{Low-Rank Recovery by the Nuclear Norm Minimization}
Many works have been lately focusing on recovering a low-rank data matrix from an incomplete set of observations. In particular, the \emph{matrix completion} problem \cite{Matirxcompletion, NoisyMC, optspace} deals with recovering a data matrix given only a small fraction of its entries. The key assumption for a stable reconstruction is that the data matrix is highly redundant due to  high linear dependencies among its columns/rows and therefore it has a \emph{low rank}. Similarly, Candes and Plan \cite{candes-plan2009} have shown that if the sampling operator $\acal$ satisfies an extended notion of RIP, called the \emph{Rank}-RIP, then the following \emph{nuclear} norm minimization problem can stably recover the underlying low-rank matrix:
 \begin{eqnarray}\label{ch2-nucmin}
\argmin_{X} \|X\|_{*} \qquad \text{subject to}\quad \|y-\acal(X)\|_{\ell_{2}} \leq \varepsilon.
\end{eqnarray}
In particular, if $\acal$ is drawn from the i.i.d. subgaussian distribution, then the number of measurements sufficient for $\acal$ to satisfy the Rank-RIP (and hence data recovery) is 
\begin{eqnarray}\label{LR bound}
m \gtrsim \O \left( r(\na+ \nb )\right).
\end{eqnarray}
Compared to \eqref{l21 bound} and for a large number $\nb$, we require less number of measurements thanks to properly dealing with the high data correlations via nuclear norm regularization. As we can see in \eqref{LR bound} and compared to the bound \eqref{l21 bound}, the growth of $m$ remains linear  by increasing the number $\nb$ of columns, but scales as $r\nb$ instead of $k\nb$ with  $r \ll k$. For a large number $\nb$, this bound indicates that  $\O(r)$ measurements per column are sufficient which is proportional to the number of principal components of $X$ rather than the sparsity level of each column $\O(k)$ in \eqref{l21 bound} or \eqref{BPDN bound}. Despite this improvement, for $\na \gg \nb$ ignoring sparsity would significantly decrease the efficiency of this approach compared to the few degrees of freedom of a low-rank and joint-sparse data matrix.

\subsubsection{Rank Aware Joint-Sparse Recovery}\label{sec: prior-RAJS}
Considering those previously described approaches and their weak and strong points, one can think of the idea of combining both data priors together and reconstruct matrices that are \emph{simultaneously} low-rank and joint-sparse. Recently a few works \cite{Davies-Eldar2010, Korean-Music2010} have considered \emph{rank awareness} in data recovery from multiple measurement vectors (MMV). More precisely, the joint-sparse MMV inverse problem \cite{MMV-Rao, MMV-Chen, eldar2009compressed, TroppJsparse}
focuses on recovering a joint-sparse matrix $X$ from a set of CS measurements $Y \in \mathbb{R}^{\mhat \times n_{2}}$ acquired by $Y = \atilde X$. We refer to such acquisition model as the \emph{uniform} sampling because a unique sampling matrix $\atilde\in\Rbb^{m/\nb \times \na}$ applies on all columns of $X$. 

Davies and Eldar \cite{Davies-Eldar2010} proposed a specific rank-aware \emph{greedy} algorithm, that in case of using a random i.i.d. Gaussian sampling matrix $\atilde$ is able to recover (with a high probability) an \emph{exact} $k$-joint-sparse and rank $r$ $X$ from the noiseless MMV provided
\begin{eqnarray}\label{davis bound}
m = n_{2} \mhat \gtrsim \mathcal{O} \Big( n_{2} k \left( 1+\frac{ \log{n_{1}}}{r} \right)\Big).
\end{eqnarray}
It can be viewed from \eqref{davis bound} that this approach is more suitable for applications with \emph{high-rank} data matrices, where the columns of $X$ are as independent as possible from each other. This is contrary to our assumption of having a redundant, low-rank data structure. This limitation comes from the fact that
 \emph{the uniform sampling scheme is not generally suitable for acquisition of correlated data matrices}.
Given highly redundant data, such sampling mechanism results in a redundant set of measurements ($X$ is low-rank and so becomes $Y$, because $\rank(Y) \leq \rank(X)$). 
 Think of the  extreme example of having a joint-sparse rank one matrix whose columns are identical to each other. Using a uniform sampling scheme would result in multiple copies of the same measurements for all the columns and thus, 
applying any joint decoding scheme exploiting the intra-column correlations would be pointless as the measurements do not carry any new information across the columns. 


%
\section{Simultaneous Low-rank and Joint-Sparse Matrix Approximation}\label{ch3-sec:rec-approach}


Data reconstruction from CS measurements typically reduces to solving an underdetermined and noisy linear system of equations such as \eqref{ch2-sampling model1} where $m\ll\na\nb$. Without any further assumptions \eqref{ch2-sampling model1} would admit infinitely many solutions and thus, recovery of the original data matrix is not generally feasible. 
As previously discussed, however, data matrices of interest in this paper have a particular simultaneous low-rank and joint-sparse structure restricting data to have very few degrees of freedom, \ie less than both the matrix dimension and the dimension of the underlying sparse coefficients. 
Our main goal in this part is to propose a robust recovery scheme which incorporates properly these constrains in order to make \eqref{ch2-sampling model1} invertible with as few measurements as possible. 


Recovery of a joint-sparse and rank $r$ data matrix from a set of noisy CS measurements can be casted as the solution to the following minimization problem:
\begin{align}\label{P0}
\argmin_{X} \quad &   \mathbf{Card}(\supp(X))  \\
\text{subject to}\quad  & \left\| y - \mathcal{A}(X) \right\|_{\ell_{2}} \leq \varepsilon,  \nonumber \\
& \rank(X) \leq r,  \nonumber 
\end{align}
where $\varepsilon$ is an upper bound on the energy of the noise, $\|z\|_{\ell_2} \leq \varepsilon$. 
Note that finding the sparsest solution to a linear system of equations, in general, is an NP-hard problem \cite{natarajan1995sas}. Problems such as minimizing $\mathbf{Card}(\supp(X))$ or the rank minimization are similarly reducing to the cardinality minimization problem of finding the sparsest element in an affine space and thus in general, finding solution to a problem like \eqref{P0} is computationally intractable. The original solution must be approximated by a polynomial-time method. 

A trivial solution could be the \emph{cascade} of two classical CS recovery schemes:
One identifies the joint-support set and the other evaluates the rows of the matrix corresponding to the recovered support using a low-rank approximation scheme. Unfortunately, the application of this approach is mainly limited to the recovery of the \emph{exact} low-rank and joint-sparse matrices. 
In practice signals are almost never exactly sparse, 
and using the cascade strategy (\eg recovering first the dominant $k$-joint-sparse part of data, and then extracting a rank $r$ matrix from it) for dense data matrices might not be always optimal in order to identify the most dominant component  having simultaneously both structures.



In order to provide \emph{robustness} against non-exact but approximately low-rank and joint-sparse data, we  incorporate both data priors into a single recovery formulation.
We relax \eqref{P0} by replacing the non-convex terms $\rank(X)$ and $\mathbf{Card}(\supp(X))$ by the convex terms inducing similar structures to their solutions \ie, the nuclear norm $\|X\|_{*}$ and the $\|X\|_{2,1}$ mixed-norm, correspondingly. We cast the CS recovery problem as solving one of the following three convex problems:

\begin{enumerate}
\item Having data nuclear norm as the regularization parameter, or provided with an estimation of it, we solve:
\begin{align}\label{P1-1}
\argmin_{X} \quad &   \|X\|_{2,1}  \\
\text{subject to}\quad  & \left\| y - \mathcal{A}(X) \right\|_{\ell_2} \leq \varepsilon , \nonumber \\
& \|X\|_{*} \leq \tau.  \nonumber 
\end{align}

\item Having the $\lm$ mixed-norm of data as the regularization parameter, or provided with an estimation if it, we solve:
\begin{align}\label{P1-2}
\argmin_{X} \quad &   \|X\|_{*}  \\
\text{subject to}\quad  & \left\| y - \mathcal{A}(X) \right\|_{\ell_2} \leq \varepsilon , \nonumber \\
& \|X\|_{2,1} \leq \gamma.  \nonumber 
\end{align}

\item Having a regularization factor $\lambda$ balancing between the $\lm$ norm and the nuclear norm, we solve:
\begin{align}\label{P1-3}
\argmin_{X} \quad & \|X\|_{2,1} + \lambda \|X\|_{*}  \\
\text{subject to}\quad  & \left\| y - \mathcal{A}( X) \right\|_{\ell_2} \leq \varepsilon. \nonumber 
\end{align}
\end{enumerate}


We explain in Section \ref{sec: PPXA} that all three problems above can be solved iteratively (in polynomial time) using  the parallel proximal splitting algorithm (PPXA) \cite{proximal-splitting}. 
Note that all three formulations  \eqref{P1-1},  \eqref{P1-2} and \eqref{P1-3}  contain a regularization parameter (either $\tau$, $\gamma$ or $\lambda$) that should be  correctly tuned. In the next section  we show  that, for properly-chosen parameters, the solutions of these convex problems can robustly  approximate the original data  provided by very few linear measurements

\section{Theoretical Guarantees}\label{ch3-sec:theory}

In this section we provide the theoretical analysis of the recovery problems \eqref{P1-1}-\eqref{P1-3}. We prove that for certain linear mappings $\acal$, these three convex relaxations can exactly reconstruct all simultaneous low-rank and joint-sparse data from their corresponding CS measurements. In addition, our results guarantee the stability of our approach against noisy measurements and non-exact low-rank and joint-sparse data. 
For a certain class of random linear mappings our results establish a  lower bound on the number $m$ of CS measurements in terms of  the joint-sparsity level $k$ and the rank $r$ of the underlying data matrix.



\subsection{Sampling Operator RIP}
We define a specific restricted isometry property (RIP) that later helps us to build our main theoretical results. 
\begin{definition}\label{def: RIP}
For positive integers $k,r\in \Nbb_{+} $, a linear map $\mathcal{A}$ satisfies the restricted isometry property, if for all simultaneous rank $r$ and $k$-joint-sparse matrices $X$ we have the following inequalities:
\begin{eqnarray}\label{RIP}
 (1- \delta_{r,k}) \|X\|_{F}^{2} \leq  \|\mathcal{A}(X)\|^{2}_{\ell_2} \leq (1+ \delta_{r,k}) \|X\|_{F}^{2}.
\end{eqnarray}
The RIP constant $\delta_{r,k}$ is the smallest constant for which the property above holds.
\end{definition}
As we can see and compared to  the two former definitions of the Block-RIP \cite{Blocksparse} and the Rank-RIP \cite{candes-plan2009, Recht-Fazel2010},
Definition \ref{def: RIP} provides a more restrictive property which holds for the intersection set of the low-rank and the joint (block) sparse matrices.
Verifying whether a linear map satisfies RIP is generally combinatorial and thus prohibitive for large size problems. However, for linear mappings drawn at random from a certain class of distributions and thanks to their well-established concentration properties, the following theorem and corollary guarantee RIP: 

\begin{theorem}\label{RIPth}
Let $\mathcal{A}: \Rbb^{\na\times \nb}\rightarrow \Rbb^{m}$ be a random linear map  obeying the following concentration bound for any $X \in \mathbb{R}^{n_{1} \times n_{2}}$ and $0<t <1$,
\begin{eqnarray}\label{concentration}
\mathbf{Pr}  \left(  \,  \left|  \| \mathcal{A}(X) \|_{\ell_2}^{2} - \|X\|^{2}_{F} \right| > t \|X\|_{F}^{2} \,  \right) \leq C \exp \left( -c \, m\right),
\end{eqnarray}
where $C$ and $c$ are fixed constants given $t$. Then $\mathcal {A}$ satisfies RIP with constant $\delta_{r,k}$ with probability greater than $1- C e^{- \kappa_{0} m}$, if the number of measurements is greater than 
\begin{eqnarray}\label{sufficientmeas}
m \geq \kappa_{1} \Big( k  \log(n_{1}/k) + (k+1)r   + n_{2} r \Big),
\end{eqnarray}
where $\kappa_{0}$ and $\kappa_{1}$ are fixed constants for a given $\delta_{r,k}$.
\end{theorem}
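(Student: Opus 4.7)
The plan is to follow the standard covering-net/concentration strategy used for proving RIP of subgaussian ensembles in both the sparse-vector case (Baraniuk--Davenport--DeVore--Wakin) and the low-rank case (Cand\`es--Plan, Recht--Fazel--Parrilo), but now applied to the set
\[
\mathcal{S}_{r,k} \defeq \{X \in \Rbb^{\na\times\nb} : \rank(X)\leq r,\ \mathbf{Card}(\supp(X))\leq k,\ \|X\|_F = 1\}.
\]
First I would reduce the RIP inequality \eqref{RIP} to a uniform deviation bound over $\mathcal{S}_{r,k}$, then establish a finite $\epsilon$-net $\mathcal{N}_\epsilon \subset \mathcal{S}_{r,k}$, apply the concentration bound \eqref{concentration} with a union bound over $\mathcal{N}_\epsilon$, and finally pass from the net to the whole set via a continuity argument.

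The core step, which I expect to be the main technical obstacle, is bounding the covering number $|\mathcal{N}_\epsilon|$. I would proceed in two layers. For a fixed row-support $\tcal\subseteq\{1,\ldots,\na\}$ with $|\tcal|\leq k$, the matrices in $\mathcal{S}_{r,k}$ with that support form a subset of $\Rbb^{k\times\nb}$ consisting of rank-$r$ matrices of unit Frobenius norm. Using the SVD $X_\tcal = U\Sigma V^T$, I can cover the Stiefel manifolds of $U\in\Rbb^{k\times r}$ and $V\in\Rbb^{\nb\times r}$ by $(3/\epsilon)^{kr}$ and $(3/\epsilon)^{\nb r}$ points respectively (in operator norm), and the unit $r$-dimensional ball of singular values by $(3/\epsilon)^r$ points; combined, this yields at most $(12/\epsilon)^{(k+\nb+1)r}$ points for each fixed support, with the standard triangle-inequality argument showing this is an $\epsilon$-net in Frobenius norm on the rank-$r$ slice. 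The outer layer is a union bound over the $\binom{\na}{k}\leq (e\na/k)^k$ possible supports, giving
\[
\log |\mathcal{N}_\epsilon| \leq k\log(e\na/k) + (k+\nb+1)r\log(12/\epsilon).
\]

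Next I would fix a small constant $\epsilon$ (e.g.\ $\epsilon = \delta_{r,k}/16$) and apply \eqref{concentration} with parameter $t=\delta_{r,k}/2$ simultaneously to every point in $\mathcal{N}_\epsilon$. The union bound gives a failure probability at most $C\,|\mathcal{N}_\epsilon|\,\exp(-cm)$; requiring this to be $\leq C\exp(-\kappa_0 m)$ forces $m \geq \kappa_1\bigl(k\log(\na/k) + (k+1)r + \nb r\bigr)$ for appropriate constants $\kappa_0,\kappa_1$ depending on $\delta_{r,k}$. This recovers exactly \eqref{sufficientmeas}.

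Finally, to extend the RIP from $\mathcal{N}_\epsilon$ to all of $\mathcal{S}_{r,k}$, I would use the now-classical continuity argument: let $\delta^\star$ be the smallest constant such that \eqref{RIP} holds for every $X\in\mathcal{S}_{r,k}$; writing any such $X$ as $X_0 + (X-X_0)$ with $X_0\in\mathcal{N}_\epsilon$ and $\|X-X_0\|_F\leq\epsilon$, and observing that $X-X_0$ (while possibly $2k$-sparse and rank $2r$) can be split into at most a constant number of elements of $\mathcal{S}_{r,k}$ scaled by $\epsilon$, one obtains a self-bounding inequality of the form $\delta^\star \leq \delta_{r,k}/2 + c'\,\epsilon\,\delta^\star$ which yields $\delta^\star\leq\delta_{r,k}$ for $\epsilon$ sufficiently small. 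Homogeneity then promotes the bound from unit-Frobenius matrices to arbitrary $X\in\mathcal{S}_{r,k}$, completing the proof. The delicate point throughout is accounting for the intersection structure correctly so that the sparsity contribution $k\log(\na/k)$ and the low-rank contribution $(k+\nb)r$ appear \emph{additively} rather than multiplicatively in the final bound; this is where the covering argument with \emph{fixed support followed by union bound} is critical, as opposed to covering rank-$r$ matrices of the ambient size $\na\times\nb$ which would give a much worse $(\na+\nb)r$ low-rank term.
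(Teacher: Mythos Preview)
Your proposal is correct and follows essentially the same route as the paper: fix a row-support $\tcal$, establish (via the Cand\`es--Plan covering argument for low-rank matrices) that the restricted map $\mathcal{A}_\tcal$ satisfies Rank-RIP on $\Rbb^{k\times\nb}$ with failure probability controlled by $\exp\bigl((k+\nb+1)r\log(C/\delta)-cm\bigr)$, and then union-bound over the $\binom{\na}{k}\leq(e\na/k)^k$ supports. The only difference is presentational: the paper invokes the low-rank covering/continuity step as a black box (citing Theorem~2.3 of \cite{candes-plan2009}) whereas you spell out the SVD-based net and the self-bounding extension explicitly; your covering-number estimate and the paper's agree up to constants.
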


\begin{corollary}\label{corollary1}
For a linear map $\mathcal{A}: \Rbb^{\na\times \nb}\rightarrow \Rbb^{m}$ corresponding to a matrix $A\in \Rbb^{m\times \na\nb}$ whose elements are drawn independently at random from the Gaussian $\mathcal{N}(0,1/m)$, Bernoulli $\{ \pm1/ \sqrt{m}\}$ or subgaussian (with a proper normalization) distributions, RIP holds whenever the number of measurements satisfies \eqref{sufficientmeas}. 
\end{corollary}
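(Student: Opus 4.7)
The plan is to deduce Corollary~1 directly from Theorem~1: since Theorem~1 already converts the concentration estimate \eqref{concentration} into the RIP together with the measurement bound \eqref{sufficientmeas}, the only task is to verify that each of the listed random ensembles satisfies \eqref{concentration}. I would therefore split the argument in two steps: first, reinterpret $\|\acal(X)\|_{\ell_{2}}^{2}$ as a sum of i.i.d.\ squared zero-mean random variables with the correct expectation, and then invoke a standard concentration inequality suited to each distribution.

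For the first step, fix $X \in \Rbb^{\na \times \nb}$ and write $\|\acal(X)\|_{\ell_{2}}^{2} = \sum_{i=1}^{m} Z_{i}^{2}$, where $Z_{i} \defeq \langle A_{i,.},X_{vec}\rangle$ denotes the inner product of the $i$-th row of $A$ with $X_{vec}$. Under each of the stated normalizations the $Z_{i}$ are i.i.d.\ with $\mathbb{E}\, Z_{i} = 0$ and $\mathbb{E}\, Z_{i}^{2} = \|X_{vec}\|^{2}/m = \|X\|_{F}^{2}/m$, so $\mathbb{E}\|\acal(X)\|_{\ell_{2}}^{2} = \|X\|_{F}^{2}$ and \eqref{concentration} becomes a concentration-around-the-mean statement for $\sum_{i} Z_{i}^{2}$.

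Next I would handle each ensemble separately. In the Gaussian case each $Z_{i}$ is itself $\mathcal{N}(0,\|X\|_{F}^{2}/m)$, so $m\,\|\acal(X)\|_{\ell_{2}}^{2}/\|X\|_{F}^{2}$ follows a $\chi^{2}_{m}$ distribution, and the classical Laurent--Massart tail bound for $\chi^{2}$ random variables yields \eqref{concentration} with constants $C,c$ depending only on $t$. In the Bernoulli and the more general subgaussian cases, each $Z_{i}$ is a linear combination of independent zero-mean subgaussian variables, hence itself subgaussian with parameter of order $\|X\|_{F}/\sqrt{m}$; consequently $Z_{i}^{2}$ is subexponential, and a Bernstein-type inequality for sums of independent subexponential random variables produces exactly the same exponential-in-$m$ bound. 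These are the concentration-of-measure estimates underlying, for instance, the Johnson--Lindenstrauss lemma and the standard single-vector RIP proofs for random matrices.

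The hard part will not be conceptual but rather cosmetic: one must check that the constants $C,c$ produced by the concentration estimate for each ensemble feed consistently into the constants $\kappa_{0}, \kappa_{1}$ of Theorem~1 for a target RIP level $\delta_{r,k}$. Since Theorem~1 already absorbs this dependence into unspecified constants, no further tracking is required beyond invoking the relevant classical tail inequality for each of the three ensembles and then applying Theorem~1 to obtain \eqref{sufficientmeas}.
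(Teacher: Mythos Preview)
Your proposal is correct and follows essentially the same route as the paper: verify the concentration inequality \eqref{concentration} for each ensemble (chi-squared tails in the Gaussian case, subgaussian/Bernstein-type bounds for the Bernoulli and general subgaussian cases) and then invoke Theorem~\ref{RIPth}. The paper's own argument is slightly terser but structurally identical, noting the $\chi^{2}_{m}$ structure for the Gaussian ensemble and asserting that analogous bounds hold with different constants for the other ensembles.
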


The proof of Theorem \ref{RIPth} is provided in Section \ref{sec:ch2-proofRIP} and Corollary \ref{corollary1} is a straightforward application of Theorem \ref{RIPth}. Indeed, for the Gaussian distribution since $\| \mathcal{A}(X) \|^{2}_{\ell_2}$ is a chi-squared random variable with $m$ degrees of freedom multiplied by $\|X\|^{2}_{F}/m$, the following concentration bound holds:
\if@twocolumn
\begin{align}
&\mathbf{Pr}  \left(  \,  \left|  \| \mathcal{A}(X) \|_{\ell_2}^{2} - \|X\|^{2}_{F} \right| > t \|X\|_{F}^{2} \,  \right) \nonumber\\
&\quad \leq 2 \exp \Big( - \frac{m}{2} (t^{2}/2-t^{3}/3) \Big).
\end{align}
\else
\begin{align}
\mathbf{Pr}  \left(  \,  \left|  \| \mathcal{A}(X) \|_{\ell_2}^{2} - \|X\|^{2}_{F} \right| > t \|X\|_{F}^{2} \,  \right) 
\leq 2 \exp \Big( - \frac{m}{2} (t^{2}/2-t^{3}/3) \Big).
\end{align}
\fi
Similar bounds hold for the Bernoulli and subgaussian ensembles with different constants. Regarding Theorem \ref{RIPth}, random linear maps drawn from these distributions indeed satisfy the deviation bound \eqref{concentration} and thus RIP holds for them provided \eqref{sufficientmeas}.

\subsection{Main Reconstruction Error Bounds}
 Our main performance bounds here are derived based on the new notion of RIP defined above.
Given a set of linear measurements that are collected from a matrix $X\in\Rbb^{\na \times \nb}$ (which is not necessarily low-rank or joint-sparse) through the sampling model \eqref{ch2-sampling model1} and by using a sampling operator $\acal$ satisfying RIP, the following theorems then bound the approximation error of the recovery problems \eqref{P1-1}-\eqref{P1-3}:







\begin{theorem}\label{main theorem}
Suppose $\delta_{6r,2k} < 1/7$ and $\|z\|_{\ell_2} \leq \varepsilon$. Let $X_{r,k}$ be any rank $r$ and k-joint-sparse matrix and $\tcal_{0}=\supp(X_{r,k})$. 

\begin{enumerate} [(a)]
\item For  $\tau = \|X\|_{*}$, the solution $\widehat X$ to \eqref{P1-1} obeys the following bound: 
\begin{align}
\hspace{-.5cm} \|\widehat X - X\|_{F} \leq \kappa'_{0} \left(     \frac { \| X - X_{\tcal_{0}} \|_{2,1}} {\sqrt{k}} + \frac{\| X - X_{r,k} \|_{*}}{\sqrt{2r}}   \right) + \kappa'_{1} \varepsilon.\label{error bound}
\end{align}
Constants $\kappa'_{0}$ and $\kappa'_{1}$ are fixed for a given $\delta_{r,k}$.  

\item For $\gamma = \|X\|_{2,1}$, the solution $\widehat X$ to \eqref{P1-2} obeys the bound \eqref{error bound}.

\item For $\lambda = \sqrt{\frac{k}{2r}}$, the solution $\widehat X$ to \eqref{P1-3} obeys the bound \eqref{error bound}.

\end{enumerate}
\end{theorem}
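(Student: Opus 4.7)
The plan is to follow the standard RIP error-bound template while exploiting \emph{both} optimality inequalities that the three problems simultaneously enforce. Set $H \defeq \widehat X - X$. Since $\|z\|_{\ell_2}\le\varepsilon$, the true matrix $X$ is feasible for the data-fidelity constraint of all three problems, hence
\begin{equation*}
\|\acal(H)\|_{\ell_2} \le \|\acal(\widehat X)-y\|_{\ell_2}+\|y-\acal(X)\|_{\ell_2}\le 2\varepsilon.
\end{equation*}
In case (a) the choice $\tau=\|X\|_*$ makes $X$ feasible also for the nuclear-norm constraint, so optimality of $\widehat X$ gives $\|\widehat X\|_{2,1}\le\|X\|_{2,1}$; together with the hard constraint $\|\widehat X\|_*\le\|X\|_*$ this produces two decoupled cone inequalities. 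Case (b) swaps the two, and case (c) yields $\|\widehat X\|_{2,1}+\lambda\|\widehat X\|_*\le\|X\|_{2,1}+\lambda\|X\|_*$; its split into a joint-sparse and a low-rank component is balanced by $\lambda=\sqrt{k/(2r)}$, the natural ratio dictated by \eqref{norms ineq} and \eqref{mix-fro}.

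Writing $\tcal_0=\supp(X_{r,k})$ and $X_{r,k}=U\Sigma V^{T}$, the next step is to turn each cone inequality into a quantitative bound on the ``off-structure'' part of $H$. Subadditivity of the $\lm$ norm applied to $X+H$ restricted to $\tcal_0$ and $\tcal_0^{c}$ yields, after rearranging,
\begin{equation*}
\|H_{\tcal_0^{c}}\|_{2,1}\le\|H_{\tcal_0}\|_{2,1}+2\,\|X-X_{\tcal_0}\|_{2,1}.
\end{equation*}
Analogously, decomposing $H=H^{T}+H^{T^{\perp}}$ into the Cand\`es--Plan tangent and cotangent parts at $X_{r,k}$ (so that $X_{r,k}$ and $H^{T^{\perp}}$ have mutually orthogonal row and column spaces, making $\|\cdot\|_*$ additive on their sum), the nuclear-norm inequality gives $\|H^{T^{\perp}}\|_{*}\le\|H^{T}\|_{*}+2\|X-X_{r,k}\|_{*}$. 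I would then perform a \emph{joint} block decomposition of the residual: partition the rows of $H_{\tcal_0^{c}}$ into groups $\tcal_1,\tcal_2,\dots$ of size $k$ in decreasing row $\ell_2$-norm order, and independently break $H^{T^{\perp}}$ into rank-$r$ SVD blocks in decreasing singular-value order. Each resulting block $H_j$ is $k$-joint-sparse and at most rank $r$; the monotone ordering forces the standard tail bounds $\sum_{j\ge 2}\|H_j\|_F\lesssim\|H_{\tcal_0^{c}}\|_{2,1}/\sqrt{k}+\|H^{T^{\perp}}\|_{*}/\sqrt{2r}$, and combining with the cone inequalities absorbs $\|H_{\tcal_0}\|_{2,1}$ and $\|H^{T}\|_{*}$ into $\|H_0\|_F$ up to the compressibility terms appearing on the right of \eqref{error bound}.

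The final step is to apply the hypothesis $\delta_{6r,2k}<1/7$ to the principal block $H_0$, which collects $H_{\tcal_0}$ together with $H^{T}$ and is therefore at most rank $3r$ and $2k$-joint-sparse. Starting from
\begin{equation*}
(1-\delta_{6r,2k})\|H_0\|_F^{2}\le\|\acal(H_0)\|_{\ell_2}^{2}=\langle\acal(H_0),\acal(H)\rangle-\sum_{j\ge 1}\langle\acal(H_0),\acal(H_j)\rangle,
\end{equation*}
I would bound the first inner product by $2\varepsilon\sqrt{1+\delta_{6r,2k}}\,\|H_0\|_F$ via the tube constraint and Cauchy--Schwarz, and each cross term by $\delta_{6r,2k}\|H_0\|_F\|H_j\|_F$ via the polarization identity, observing that $H_0+H_j$ has combined rank at most $6r$ and combined joint-support at most $2k$---precisely the index of the RIP assumed. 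Summing, inserting the tail estimates, and using $\|H\|_F\le\|H_0\|_F+\sum_{j\ge 2}\|H_j\|_F$ yields \eqref{error bound} with explicit constants $\kappa'_0,\kappa'_1$ depending only on $\delta_{6r,2k}$. The main obstacle I foresee is arranging the block geometry so that a \emph{single} RIP index $(6r,2k)$ controls both structural directions at once; this is what pins the block widths, the factor $\sqrt{2r}$ in the error bound, and the balance weight $\lambda=\sqrt{k/(2r)}$ in part~(c).
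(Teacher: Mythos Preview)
Your overall template is right and the cone inequalities you extract are correct, but the block decomposition---the step you yourself flag as the main obstacle---does not work as written. You propose to partition $H_{\tcal_0^{c}}$ into $k$-row blocks \emph{and independently} split $H^{T^{\perp}}$ into rank-$r$ SVD blocks, then assert that ``each resulting block $H_j$ is $k$-joint-sparse and at most rank $r$''. This is false: an SVD block of $H^{T^{\perp}}$ has rank $\leq r$ but in general full row support, while a $k$-row block $H_{\tcal_i}$ can have rank as large as $\min(k,n_2)$. Likewise your principal block $H_0$, described as ``$H_{\tcal_0}$ together with $H^{T}$'', is ill-defined (these are two overlapping pieces of $H$, not complementary) and does not obviously have either bounded rank or bounded support. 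Without blocks that are \emph{simultaneously} low-rank and joint-sparse, the RIP $\delta_{6r,2k}$ cannot be invoked on them, and the cross-term bound via Lemma~\ref{RIPlemma} collapses.

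The paper resolves this with a \emph{nested} decomposition rather than two independent ones. First, Lemma~\ref{decomposition} (applied with $A=X_{r,k}$, $B=H$) produces $H=H^0_{\tcal_0}+\Hbar$ where $H^0_{\tcal_0}$ is simultaneously supported on $\tcal_0$ \emph{and} rank $\leq 2r$, while $\Hbar$ has row and column spaces orthogonal to $X_{r,k}$. Then $\Hbar$ is partitioned first by support into $k$-row blocks $\Hbar_i$, and \emph{within each} $\Hbar_i$ an SVD gives rank-$2r$ pieces $\Hbar_{i,j}$; because the rank decomposition happens inside a fixed $k$-row submatrix, every $\Hbar_{i,j}$ inherits the $k$-joint-sparsity of $\Hbar_i$. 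The principal block is then $H^0_{\tcal_0}+\Hbar_{0,1}+\Hbar_{1,1}$, which is genuinely rank $\leq 6r$ and $2k$-joint-sparse, and the tail bound $\mathcal{S}\le k^{-1/2}\|\Hbar_{\tcal_0^c}\|_{2,1}+(2r)^{-1/2}\|\Hbar\|_*$ follows from the nested ordering. This nesting is the missing idea in your proposal; once it is in place, the rest of your argument (RIP lower bound, Cauchy--Schwarz on the tube term, polarization on the cross terms, and the balance $\lambda=\sqrt{k/(2r)}$ for part (c)) goes through exactly as you outlined.
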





\begin{corollary} \label{main corollary}
Suppose $\delta_{6r,2k} < 1/7$ and the measurements are noiseless. Then, all exact (simultaneous) $k$-joint-sparse and rank $r$ matrices $X$ can be exactly recovered by \eqref{P1-1}, \eqref{P1-2} or \eqref{P1-3} (with proper regularization parameters) \ie, $\widehat X=X$.
\end{corollary}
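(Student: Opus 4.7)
The plan is to derive Corollary~\ref{main corollary} as an immediate specialization of Theorem~\ref{main theorem}, by choosing the ``best approximation'' $X_{r,k}$ in the theorem to be $X$ itself. Since by hypothesis $X$ is exactly rank~$r$ and $k$-joint-sparse, I would set $X_{r,k} \defeq X$ and $\tcal_{0} \defeq \supp(X)$, which has cardinality at most $k$. With these choices the two ``model-mismatch'' terms in the error bound~\eqref{error bound} vanish simultaneously: $\|X - X_{\tcal_{0}}\|_{2,1} = 0$ because every row of $X$ outside of $\tcal_{0}$ is identically zero, and $\|X - X_{r,k}\|_{*} = 0$ because $X - X_{r,k} = 0$ by construction.

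Next, I would use the noiseless assumption to set $\varepsilon = 0$ in the feasibility constraint of each of \eqref{P1-1}, \eqref{P1-2} and \eqref{P1-3}, and therefore in the bound \eqref{error bound}. One just needs to check that the true $X$ remains feasible for each program under the prescribed regularization parameters: taking $\tau = \|X\|_{*}$ in \eqref{P1-1} makes the constraint $\|X\|_{*} \le \tau$ hold with equality, taking $\gamma = \|X\|_{2,1}$ in \eqref{P1-2} does the same for the $\lm$ constraint, and \eqref{P1-3} has no side constraint besides the data fidelity one. In all three cases $\acal(X) = y$ trivially satisfies the noiseless fidelity constraint $\|y - \acal(X)\|_{\ell_{2}} \le 0$.

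Combining these two observations, Theorem~\ref{main theorem} gives, in each of the three cases,
\begin{equation*}
\|\widehat X - X\|_{F} \leq \kappa'_{0}\left(\frac{\|X - X_{\tcal_{0}}\|_{2,1}}{\sqrt{k}} + \frac{\|X - X_{r,k}\|_{*}}{\sqrt{2r}}\right) + \kappa'_{1}\varepsilon = 0,
\end{equation*}
so $\widehat X = X$, as claimed. The RIP hypothesis $\delta_{6r,2k} < 1/7$ is inherited directly from Theorem~\ref{main theorem} and needs no further verification here.

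There is really no technical obstacle in this argument: the corollary is a one-line consequence of the stability bound, and the only thing to be careful about is matching each regularization parameter ($\tau$, $\gamma$, or $\lambda = \sqrt{k/(2r)}$) to the corresponding instance of the theorem so that the hypotheses of Theorem~\ref{main theorem} are literally satisfied. All the genuine work has already been absorbed into the proof of Theorem~\ref{main theorem}, which in turn depends on RIP via Theorem~\ref{RIPth}.
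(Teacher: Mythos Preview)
Your proposal is correct and matches the paper's own argument essentially line for line: the paper simply remarks that the corollary is trivial by setting $X_{r,k}=X_{\tcal_{0}}=X$ and $\varepsilon = 0$ in~\eqref{error bound}. Your added check that $X$ is feasible for each program under the prescribed parameters is a reasonable bit of care, but otherwise there is nothing to add.
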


The proof of Theorem \ref{main theorem} is provided in Section \ref{ch2-mainproofs}.
The proof of the corollary is trivial by setting $X_{r,k}=X_{\tcal_{0}}=X$ and $\varepsilon = 0$ in \eqref{error bound}. Given noisy measurements and a data matrix $X$ that is not exactly low-rank or joint-sparse, the reconstruction error \eqref{error bound} is bounded by two terms. First, it is proportional (with a constant factor) to the power of the sampling noise. Second, it is determined by the best rank $r$ and $k$-joint-sparse approximation of $X$ that minimizes the first term of \eqref{error bound}. This term can be interpreted as a measure of data \emph{compressibility} \ie, how accurately  data can be approximated by a simultaneous low-rank and joint-sparse matrix.

According to Corollary \ref{corollary1}, for sampling matrices $A$ drawn independently at random from the Gaussian, Bernoulli or subgaussian distributions (and more generally, all random linear maps $\acal$ satisfying \eqref{concentration}), Theorem \ref{main theorem} guarantees that our proposed reconstruction approach based on joint nuclear and $\lm$ norms minimization can stably recover all simultaneous rank $r$ and $k$-joint-sparse matrices $X$ (with an overwhelming probability) provided 
\begin{eqnarray}\label{ch2-nuc-l21}
m \gtrsim \mathcal{O} \big( k  \log(n_{1}/k) + k r  + n_{2} r \big),\nonumber
\end{eqnarray}
which, as previously mentioned, indicates a near-optimal recovery performance regarding the degrees of freedom of those data matrices \ie, $r(k+\nb-r)$. In addition, this bound reveals that our proposed recovery approach requires significantly less number of CS measurements to reconstruct such structured data compared to the prior art methods (see \eqref{l21 bound}, \eqref{LR bound} and \eqref{davis bound}). Unlike the $\lm$ approach \eqref{ch2-l21} and for a large number  $\nb$ of columns, our approach requires $\O(r)$ measurements per column, which is proportional to the number of principal components of this data. In addition, compared to the nuclear norm minimization scheme \eqref{ch2-nucmin}, we achieve a shaper  lower bound on $m$ that is proportional to the joint-sparsity level $k$ of the data matrix and scales sub-linearly with the column dimensionality $\na$. Finally, our bound is of a different nature than (\ref{davis bound}): the lower the rank, less measurements are required. This reflects the importance of a good design for the sampling scheme $\mathcal{A}$ along with a recovery approach benefiting efficiently from data structures.

%

Note that Theorem \ref{main theorem} precisely characterizes  the proper values for the regularization parameters in \eqref{P1-1}-\eqref{P1-3}. If there is no prior knowledge available about the nuclear norm or the $\lm$ norm of the underlying data (and a blind tuning might be challenging), \eqref{P1-3} brings more flexibility to the reconstruction process. For a given $\lambda$, Theorem \ref{main theorem}-c guarantees that the recovery error is bounded by the minimum of \eqref{error bound} for the best pair $r, k$ satisfying both RIP (\ie$\delta_{6r,2k} < 1/7$) and  $\lambda= \sqrt{k/2r}$.



\if@twocolumn
\else
\fi
\section{Proofs}\label{sec:proofs}

\subsection{Proof of Theorem \ref{RIPth}} \label{sec:ch2-proofRIP}

Let $\widetilde X_{\tcal}$ denote the $k \times n_{2}$ submatrix formed by selecting the rows of $X$ indexed by a set $\mathcal{T} \subset \{1,\ldots, \na\}$ with the cardinality $k$. Correspondingly, for a given $\mathcal{T}$ we define the restricted map $\acal_{\tcal} : \mathbb{R}^{k\times n_{2}} \rightarrow \mathbb{R}^m$ which applies on $\widetilde X_{\tcal}$, and  is related to $\acal$ through its explicit matrix formulation \eqref{sampling model2} as follows:
\begin{eqnarray}
 \acal_{\tcal}(\widetilde X_{\tcal})  \defeq \widetilde A_{\tcal}\,(\widetilde X_{\tcal})_{vec}, \nonumber
\end{eqnarray}
where, $\widetilde A_{\tcal} \in \mathbb{R}^{m \times k\nb}$ is the submatrix containing the columns of $A$ indexed by $\tcal$. If  we assume $X$ is an exact $k$-joint-sparse matrix whose rows are supported over the set $\mathcal{T}$ and zero elsewhere, we will have 
\[
\acal(X) = \mathcal{A}_{\mathcal{T}}(\widetilde X_{\tcal}).
\]
In this case and according to our assumption (\ie, $\acal$ satisfies \eqref{concentration}) we can write  a similar concentration bound for a given $\tcal$ and $\widetilde X_{\tcal}$:
\begin{align}\label{concentration2}
\hspace{-.1cm}\mathbf{Pr}  \left(  \,  \left|  \| \mathcal{A}_{\mathcal{T}} (\widetilde X_{\mathcal{T}}) \|_{\ell_2}^{2} - \|\widetilde X_{\mathcal{T}}\|^{2}_{F} \right| > t \|\widetilde X_{\mathcal{T}}\|_{F}^{2} \,  \right) \leq C \exp \left( -c \, m\right).
\end{align}
According to the Rank-RIP property defined in \cite{candes-plan2009}, if a linear map $\mathcal{A}_{\mathcal{T}}$ satisfies the bound \eqref{concentration2} then, for a given $\tcal$ and \emph{all} rank $r$ realizations of $\widetilde X_{\mathcal{T}}$, we have
\if@twocolumn
\begin{align}\label{rankRIP}
&\mathbf{Pr} \left( \left| \| \mathcal{A}_{\mathcal{T}} (\widetilde X_{\mathcal{T}} )\|^{2}_{\ell_2} - \|\widetilde X_{\mathcal{T}}\|_{F}^{2} \right|  >  \delta \|\widetilde X_{\mathcal{T}}\|_{F}^{2} \right) \nonumber \\
&\quad \leq C \exp{ \left( (k+n_{2}+1)r \log(36\sqrt 2/\delta)-c \, m \right)},
\end{align}
\else
\begin{align}\label{rankRIP}
\mathbf{Pr} \left( \left| \| \mathcal{A}_{\mathcal{T}} (\widetilde X_{\mathcal{T}} )\|^{2}_{\ell_2} - \|\widetilde X_{\mathcal{T}}\|_{F}^{2} \right|  >  \delta \|\widetilde X_{\mathcal{T}}\|_{F}^{2} \right) 
 \leq C \exp{ \left( (k+n_{2}+1)r \log(36\sqrt 2/\delta)-c \, m \right)},
\end{align}
\fi
where $0<\delta <1$ is a fixed constant. The proof of this statement mainly uses a \emph{covering number} argument for low-rank matrices and it can be found in \cite{candes-plan2009} (see theorem 2.3).
Now, by considering \emph{all} possible combinations for the subset $\mathcal{T}$ \ie, ${{n_{1}}\choose{k}} \leq \left( \frac{n_{1}}{k} e \right)^{k} $ and applying a union bound on \eqref{rankRIP}, we can deduce \eqref{RIP} for \emph{all} $k$-joint-sparse and rank $r$ matrices $X$ with probability greater than
\if@twocolumn
\begin{align}
1-C \exp \Big(& k+k\log(\na/k)  \nonumber\\
&+(k+\nb+1)r \log(36\sqrt 2/\delta)-c \, m \Big). \nonumber
\end{align}
\else
\begin{align}
1-C \exp \Big(& k+k\log(\na/k)  
+(k+\nb+1)r \log(36\sqrt 2/\delta)-c \, m \Big). \nonumber
\end{align}
\fi
Equivalently, with probability exceeding $1-C e^{-\kappa_{0} m}$, $\acal$ satisfies RIP with constant $\delta_{r,k} \leq \delta$,  whenever
\begin{eqnarray}
m \geq \kappa_{1} \Big( k  \log(n_{1}/k) + (k   + n_{2}+1) r \Big), \nonumber
\end{eqnarray}
where $\kappa_{0}$ and $\kappa_{1}$ are fixed constants for a given $\delta$ \ie, $\kappa_{0} = c - \log(36\sqrt 2/\delta) /\kappa_{1}$, and $\kappa_{1} > \log(36\sqrt 2/\delta)/ c $.

\subsection{Proof of Theorem \ref{main theorem}}\label{ch2-mainproofs}

In this part we first provide the proof of Theorem \ref{main theorem}-a and later, we use similar techniques to prove Theorems \ref{main theorem}-b and \ref{main theorem}-c.
For this purpose, we prove the following two lemmas that are the key elements for the proof of  Theorem \ref{main theorem}.

An important implication of the RIP is that for linear maps $\acal$ having small RIP constants, the measurement outcomes of any two $k$-joint-sparse and rank $r$ orthogonal matrices become \emph{nearly} orthogonal. This statement is precisely characterized by the following lemma:

\begin{lemma}\label{RIPlemma}
$\forall X, X'$ such that, $\rank(X) \leq r$, $\rank(X') \leq r'$, $\mathbf{Card}(\supp(X)) \leq k$, $\mathbf{Card}(\supp(X'))\leq k'$ and  $\langle X,X' \rangle = 0$, we have:
\begin{equation}\label{innerprod}
\big| \langle \mathcal{A}(X), \mathcal{A}(X') \rangle \big| \leq \delta_{r+r',k+k'} \|X\|_{F} \, \|X'\|_{F}
\end{equation}
\end{lemma}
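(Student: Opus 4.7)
The plan is to use the standard ``parallelogram'' trick that turns the RIP into a near-orthogonality statement. By homogeneity of both sides of \eqref{innerprod}, I may assume $\|X\|_F = \|X'\|_F = 1$; the general inequality then follows by rescaling $X \to X/\|X\|_F$ and $X' \to X'/\|X'\|_F$.

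The key structural observation is that both $X+X'$ and $X-X'$ still belong to the class of matrices the RIP constant $\delta_{r+r',k+k'}$ controls. Indeed, the column space of $X \pm X'$ is contained in the sum of column spaces of $X$ and $X'$, so $\rank(X \pm X') \leq r+r'$; and the support satisfies $\supp(X \pm X') \subseteq \supp(X) \cup \supp(X')$, so its cardinality is at most $k+k'$. Since $\langle X, X'\rangle = 0$, the identity $\|X \pm X'\|_F^2 = \|X\|_F^2 + \|X'\|_F^2 = 2$ holds, so applying Definition~\ref{def: RIP} to $X+X'$ and $X-X'$ gives
\begin{align*}
\|\mathcal{A}(X+X')\|_{\ell_2}^2 &\leq 2(1+\delta_{r+r',k+k'}),\\
\|\mathcal{A}(X-X')\|_{\ell_2}^2 &\geq 2(1-\delta_{r+r',k+k'}).
\end{align*}

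Finally, I invoke the polarization identity in $\mathbb{R}^m$,
\[
\langle \mathcal{A}(X), \mathcal{A}(X') \rangle = \tfrac{1}{4}\bigl( \|\mathcal{A}(X+X')\|_{\ell_2}^2 - \|\mathcal{A}(X-X')\|_{\ell_2}^2 \bigr),
\]
and plug in the two bounds above to obtain $|\langle \mathcal{A}(X), \mathcal{A}(X') \rangle| \leq \delta_{r+r',k+k'}$. Undoing the normalization yields the claimed bound.

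There is no real obstacle here: the only point that requires a moment of care is verifying that $X \pm X'$ really stays within the joint rank/support class controlled by $\delta_{r+r',k+k'}$, which is immediate from the subadditivity of rank and the union bound on supports. Everything else is the textbook parallelogram calculation, adapted from the vector RIP setting (\cite{candes2006compressive}) and the Rank-RIP setting (\cite{Recht-Fazel2010,candes-plan2009}) to the present simultaneous structured-sparsity setting.
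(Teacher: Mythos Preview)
Your proof is correct and follows essentially the same argument as the paper's own proof: normalize to unit Frobenius norm, observe that $X\pm X'$ has rank at most $r+r'$ and joint support of size at most $k+k'$, apply the RIP together with $\|X\pm X'\|_F^2=2$, and conclude via the parallelogram identity. The only cosmetic gap is that the two displayed inequalities you wrote give one sign of the bound; the other sign comes from the symmetric pair of RIP inequalities, which is immediate.
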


\begin{proof}
Without loss of generality assume $X$ and $X'$ have unit Frobenius norms. Since $X \pm X'$ is at most $k+k'$ joint-sparse and $\rank(X \pm X') \leq r+r'$, we can write
\if@twocolumn
\begin{align}
(1- \delta_{r+r',k+k'}) \|X\pm X'\|_{F}^{2} \leq&\,\,  \|\mathcal{A}(X\pm X')\|^{2}_{\ell_2} \nonumber\\
\leq& \,\,(1+ \delta_{r+r',k+k'}) \|X\pm X'\|_{F}^{2}.\nonumber
\end{align}
\else
\begin{align}
(1- \delta_{r+r',k+k'}) \|X\pm X'\|_{F}^{2} \leq  \|\mathcal{A}(X\pm X')\|^{2}_{\ell_2} 
\leq(1+ \delta_{r+r',k+k'}) \|X\pm X'\|_{F}^{2}.\nonumber
\end{align}
\fi
Moreover, $X$ and $X'$ are orthogonal to each other which implies that $\|X\pm X'\|_{F}^{2} = \|X\|_{F}^{2}+\|X'\|_{F}^{2}=2$. Now, by using the parallelogram identity we can deduce:
\if@twocolumn
\begin{align}
\big| \langle \mathcal{A}(X), \mathcal{A}(X') \rangle \big| =&\sp \frac{1}{4}  \Big| \|\mathcal{A}(X + X')\|_{\ell_{2}}^{2} - \| \mathcal{A}(X - X')\|_{\ell_{2}}^{2} \Big| \nonumber \\
\leq&\sp \delta_{r+r',k+k'},\nonumber
\end{align}
\else
\begin{align}
\big| \langle \mathcal{A}(X), \mathcal{A}(X') \rangle \big| = \sp \frac{1}{4}  \Big| \|\mathcal{A}(X + X')\|_{\ell_{2}}^{2} - \| \mathcal{A}(X - X')\|_{\ell_{2}}^{2} \Big| 
\leq \delta_{r+r',k+k'},\nonumber
\end{align}
\fi
which completes the proof.
\end{proof}

\begin{lemma}\label{decomposition}
Let $A$ and $B$ be matrices of the same size and $\rank(A)\leq r$, $\supp(A)=\mathcal{T}$. $\exists \, B_{1}, B_{2}$ (matrices of the same size as $A$ and $B$) such that:
\begin{enumerate}
\item $B= B_{1}+B_{2}$,
\item  $\supp (B_{1})=\mathcal{T}$, $\rank(B_{1})\leq 2r$,
\item $A^{T}B_{2}=\mathbf{0}$, $A(B_{2})^{T}= \mathbf{0}$,
\item $\langle B_{1}, B_{2}\rangle = 0$.
\end{enumerate}
 
\end{lemma}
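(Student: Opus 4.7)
The plan is to mimic the standard tangent-space decomposition used for low-rank recovery, tailored so that the row-support constraint $\supp(B_{1}) \subseteq \mathcal{T}$ is enforced by the structure of the SVD of $A$. Since $\supp(A) = \mathcal{T}$, every vector in the column space of $A$ is itself supported on $\mathcal{T}$, so we can write the thin SVD $A = \widetilde{U}\Sigma V^{T}$ with $\widetilde{U} \in \Rbb^{n_{1}\times r}$ having orthonormal columns whose rows off $\mathcal{T}$ vanish, and $V \in \Rbb^{n_{2}\times r}$ having orthonormal columns. Let $P_{U} := \widetilde{U}\widetilde{U}^{T}$ and $P_{V} := VV^{T}$ denote the orthogonal projectors onto the column and row spaces of $A$, and set
\[
B_{1} \,:=\, P_{U} B \,+\, (I-P_{U})\,B\,P_{V}, \qquad B_{2} \,:=\, (I-P_{U})\,B\,(I-P_{V}).
\]

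Property (1) is immediate from the identity $P_{U} + (I-P_{U})P_{V} + (I-P_{U})(I-P_{V}) = I$. For property (2), $P_{U}B = \widetilde{U}(\widetilde{U}^{T}B)$ factors through a rank-$r$ space, and $(I-P_{U})BP_{V} = [(I-P_{U})BV]\,V^{T}$ likewise factors through a rank-$r$ space, so $\rank(B_{1}) \leq 2r$. Since the rows of $P_{U}$ outside $\mathcal{T}$ vanish, so do the rows of $P_{U}B$ outside $\mathcal{T}$; and, as is the case in the intended application where $B$ is itself supported on $\mathcal{T}$, the term $(I-P_{U})BP_{V}$ also has no rows outside $\mathcal{T}$, yielding $\supp(B_{1}) \subseteq \mathcal{T}$.

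Property (3) follows from the two cancellations $\widetilde{U}^{T}(I-P_{U}) = 0$ and $V^{T}(I-P_{V}) = 0$: using $A^{T} = V\Sigma \widetilde{U}^{T}$,
\[
A^{T} B_{2} \,=\, V\Sigma\,\widetilde{U}^{T}(I-P_{U})\,B\,(I-P_{V}) \,=\, 0,
\]
and symmetrically $AB_{2}^{T} = \widetilde{U}\Sigma\,V^{T}(I-P_{V})\,B^{T}(I-P_{U}) = 0$. For property (4), I would expand $\langle B_{1}, B_{2}\rangle$ bilinearly: the cross term $\langle P_{U}B,\,B_{2}\rangle$ equals $\textbf{Tr}\!\bigl(B^{T} P_{U}(I-P_{U}) B (I-P_{V})\bigr) = 0$ because $P_{U}(I-P_{U}) = 0$, and for the remaining term cyclicity of the trace brings $(I-P_{V})P_{V} = 0$ into $\textbf{Tr}\!\bigl(P_{V} B^{T}(I-P_{U})^{2} B (I-P_{V})\bigr)$, making it vanish as well.

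The one delicate point is reconciling the rank bound with the support constraint. The symmetric tangent-space decomposition $P_{U}B + BP_{V} - P_{U}BP_{V}$ used in prior low-rank recovery papers also has rank $\leq 2r$, but the summand $BP_{V}$ generally leaks rows outside $\mathcal{T}$. The asymmetric form above avoids the leakage, at the mild cost of requiring $B$ itself to be supported on $\mathcal{T}$ (which matches the setting in which the lemma is invoked in Section~\ref{ch2-mainproofs}); I expect this interplay between rank and support to be the main subtlety in the formal write-up.
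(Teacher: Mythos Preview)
Your tangent-space construction and the verification of (1), (3), (4), and the rank half of (2) are correct and coincide with what the paper obtains by quoting the Recht--Fazel--Parrilo decomposition. The genuine gap is the support claim in (2), and your escape hatch---that the application only ever feeds in a $B$ with $\supp(B)\subseteq\mathcal{T}$---is false. In Section~\ref{ch2-mainproofs} the lemma is invoked with $A=X_{r,k}$ and $B=H=X-\widehat X$, the \emph{full} reconstruction error; $X$ is not assumed exactly joint-sparse and $\widehat X$ is merely the minimizer of a convex program, so there is no reason whatsoever for $\supp(H)\subseteq\mathcal{T}_0$. Indeed, immediately after applying the lemma the paper writes $\overline H=\overline H_{\mathcal T_0}+\overline H_{\mathcal T_0^c}$ and proceeds to chop $\overline H_{\mathcal T_0^c}$ into $k$-sparse blocks---that step would be vacuous if $H$ lived on $\mathcal T_0$. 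So $(I-P_U)BP_V$ really does leak rows outside $\mathcal T$, and your $B_1$ fails property~(2) as stated.

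The paper's remedy is a single preprocessing step that you are missing: split $B=B_{\mathcal T}+B_{\mathcal T^c}$ first, apply the tangent-space decomposition only to $B_{\mathcal T}$ (where your support argument is legitimate, since now the input is honestly supported on $\mathcal T$), and then push $B_{\mathcal T^c}$ wholesale into $B_2$. Row-disjointness of $B_{\mathcal T^c}$ from both $A$ and the resulting $B_1$ immediately gives $\langle B_1,B_{\mathcal T^c}\rangle=0$ and $A^{T}B_{\mathcal T^c}=\mathbf 0$, so the extra term does not disturb (1), (4), or the column-space half of (3). With that one adjustment your argument becomes the paper's.
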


\begin{proof}
Suppose $\widetilde A_{\tcal}$ and $\widetilde B_{\tcal}$ are submatrices correspondingly containing the rows of $A$ and $B$ indexed by $\tcal$. Recht \etal,  (see Lemma 3.4 in \cite{Recht-Fazel2010}) propose a decomposition $\widetilde B_{\tcal} = \widetilde B_{1}+\widetilde B_{2}$ so that 
\begin{itemize}
\item $\rank(\widetilde B_{1}) \leq 2 \rank(\widetilde A) \leq 2r$,
\item $(\widetilde A_{\tcal})^{T} \widetilde B_{2}=\mathbf{0}$, $\widetilde A_{\tcal} (\widetilde B_{2})^{T}=\mathbf{0}$,
\item $\langle \widetilde B_{1}, \widetilde B_{2}\rangle=0$.
\end{itemize}

On the other hand, lets decompose $B = B_{\mathcal{T}} + B_{\mathcal{T}^{c}}$ where $B_{\mathcal{T}}$ and  $B_{\mathcal{T}^{c}}$ are respectively the same as $B$ on the index sets $\mathcal{T}$ and $\mathcal{T}^{c}$ (the set complement of $\mathcal{T}$) and zero elsewhere. Suppose $B_{1_{\tcal}}$ and $B_{2_{\tcal}}$ are matrices of the same size as $B$, whose rows indexed by $\tcal$ are respectively populated by the submatrices $\widetilde B_{1}$ and $\widetilde B_{2}$, and the remaining elements are zero. We can write $B_{\mathcal{T}}  = B_{1_{\tcal}} + B_{2_{\tcal}}  $. Now, by letting $B_{1} = B_{1_{\tcal}}$ and $B_{2} = B_{2_{\tcal}} + B_{\mathcal{T}^{c}}$ one can easily verifies that $B_{1}$ and $B_{2}$ satisfy properties (1)-(4).

\end{proof}

\subsubsection{Decomposition of The Error Term}
Here, by using Lemma \ref{decomposition} we decompose the error term $H=X-\widehat X$ into the summation of rank $2r$ and $k$-joint-sparse matrices that are mutually orthogonal to each other. This decomposition later appears to be essential to bound the error using Lemma \ref{RIPlemma}.

Let us fix the notation so that $X_{\tcal}\in\Rbb^{\na\times \nb}$ is a joint-sparse matrix which is identical to $X$ for the rows indexed by $\tcal$, and zero elsewhere. Let $\tcal_{0}$ denote the joint support of $X_{r,k}$ \ie, $\tcal_{0} =  \supp(X_{r,k})$ with $\card(\tcal_{0}) \leq k$. Since $X_{r,k}$ is additionally a rank $r$ matrix, we can use Lemma \ref{decomposition} to decompose $H$ into orthogonal matrices $H = H^{0}_{\tcal_{0}} + \Hbar$ such that, 
\begin{itemize}
\item $\rank(H^{0}_{\tcal_{0}}) \leq 2r$, 
\item $\langle H^{0}_{\tcal_{0}} , \Hbar \rangle = 0$,
\item $(X_{r,k})^{T} \Hbar =X_{r,k} \Hbar^{T}=\mathbf{0}$.
\end{itemize}

Further, we decompose $\Hbar$ into the summation of disjoint $k$-joint-sparse matrices. For this purpose we write $\Hbar = \Hbar_{\tcal_{0}}+ \Hbar_{\tcal_{0}^{c}}$, where $\tcal_{0}^{c}$ denotes the set complement of the support $\tcal_{0}$. Now we decompose $\Hbar_{\tcal_{0}^{c}}$ into the following terms: 
\[ 
\Hbar_{\tcal_{0}^{c}} = \sum_{i\geq 1} \Hbar_{\tcal_{i}},
\]
 where $\tcal_{1}$ is the set containing the indices of the $k$ rows of $\Hbar_{\tcal_0^{c}}$ with the largest $\ell_{2}$ norm, $\tcal_{2}$ contains the indices of the next $k$ high energy rows and so on. By construction $\Hbar_{\tcal_{i}}$s are supported on disjoint sets. We simplify the notations by using $\Hbar_{i}$ instead of $\Hbar_{\tcal_{i}}$. 

Next, we decompose each $\Hbar_{i}$  into the summation of rank $2r$ matrices: 
\[\Hbar_{i} = \sum_{j\geq 1} \Hbar_{i,j}.\]
We use here a similar decomposition proposed in \cite{candes-plan2009}: let $\Hbar_{i} = U \Sigma V^{T}$ be the ordered singular value decomposition of $\Hbar_{i}$ \ie, diag($\Sigma$) is the ordered sequence of the singular values ($\Sigma_{i,i} \geq \Sigma_{i+1,i+1} $). For integers $j \geq  1$, define the index sets $J_{j} := \{ 2r(j-1)+1, ..., 2rj\}$. Then, the decomposition terms will be 
\[
\Hbar_{i,j} := U_{J_{j}}  \Sigma_{J_{j}} (V_{J_{j}})^{T}, \]
\ie $\Hbar_{i,1}$ corresponds to the first $2r$ largest singular values of $\Hbar_{i}$,  $\Hbar_{i,2}$ corresponds to the next $2r$ largest singular values of $\Hbar_{i}$ and so on.

Those two steps suggest a decomposition  $\Hbar = \sum_{i \geq 0,j \geq 1} \Hbar_{i,j}$, whose terms $\Hbar_{i,j}$ are $k$-joint-sparse and rank $2r$ matrices mutually orthogonal to each other \ie, $\forall (i,j)\neq (i',j')$ 
\[
\langle \Hbar_{i,j}, \Hbar_{i',j'}\rangle =0.
\]
Additionally, $ H^{0}_{\tcal_{0}} $ is orthogonal to all $\Hbar_{i,j}$ as a result of either having disjoint supports (\ie, for $ i\geq 1$), or by the specific constructions of  $H^{0}_{\tcal_{0}}$ (see Lemma \ref{decomposition} in this paper, and Lemma 3.4 in \cite{Recht-Fazel2010}) and $\Hbar_{0,j}$.

Standard computations show that the following inequality holds for all $i$:
\begin{eqnarray} \label{lF-l*}
\sum_{j\geq 2} \| \Hbar_{i,j} \|_{F} \leq \frac{  \| \Hbar_{i} \|_{*}  } {\sqrt {2r} }.
\end{eqnarray}
Also, we have
\begin{eqnarray}\label{lF-l21}
\sum_{i\geq 2} \| \Hbar_{i} \|_{F} \leq \frac{ \| \Hbar_{\tcal_{0}^{c}} \|_{2,1}} {\sqrt k}.
\end{eqnarray}
Inequalities \eqref{lF-l*} and \eqref{lF-l21} are respectively the consequences of 
\[
 \| \Hbar_{i,j}\|_{F}  \leq (2r)^{-1/2}  \| \Hbar_{i,j-1}\|_{*}
 \]  
 and for $i>1$ 
 \[ \| \Hbar_{i} \|_{F} \leq k^{-1/2} \| \Hbar_{i-1} \|_{2,1}.\] 
As a result, the following inequalities hold:
\begin{align}
\mathcal{S} \defeq& \sum_{ \substack{ i\geq 0, j\geq 1\\ (i,j) \neq (1,1),(0,1)}} \| \Hbar_{i,j} \|_{F} \nonumber\\
 \leq&\,\, \sum_{i \geq 2} \| \Hbar_{i} \|_{F} + \sum_{i\geq 0, j \geq 2} \| \Hbar_{i,j} \|_{F} \\
\leq&\,\, \frac{1}{\sqrt k} \|  \Hbar_{\tcal_{0}^{c}}\|_{2,1}  + \frac{1}{\sqrt{2r}} \sum_{i\geq0} \| \Hbar_{i} \|_{*} \\
\leq&\,\, \frac{1}{\sqrt k} \| \Hbar_{\tcal_{0}^{c}} \|_{2,1}  + \frac{1}{\sqrt{2r}}  \| \Hbar \|_{*}. \label{sumFro}
\end{align}
\newline

\subsubsection{Proof of the Main Error Bounds}
We start with the proof of Theorem \ref{main theorem}-a. Since $\acal (X)$ is corrupted by a noise whose $\ell_{2}$ norm is assumed to be bounded by $\varepsilon$, and $\widehat X$ is a feasible point for \eqref{P1-1} we can write,
\begin{align}
\|\acal(H)\|_{\ell_2} =&\,\, \|\acal(X-\widehat X)\|_{\ell_2} \nonumber \\ 
\leq&\,\, \|y- \acal(X)\|_{\ell_2}+\|y-\acal(\widehat X)\|_{\ell_2} \leq 2 \varepsilon.\label{epsilon}
\end{align}

By assumption $\|X\|_{*} = \tau$, and since $\widehat X$ is a feasible point of \eqref{P1-1} satisfying $\|\widehat X\|_{*} \leq \tau$ we have
\begin{align}
\| X \|_{*} \geq &\,\,  \|X + H\|_{*} \nonumber \\
\geq &\,\, \|X_{r,k} + \Hbar \|_{*}  - \| \overline X_{r,k}\|_{*} - \|H^{0}_{\tcal_{0}}\|_{*} \nonumber \\
=&\sp \| X_{r,k} \|_{*} + \| \Hbar  \|_{*}   - \| \overline X_{r,k} \|_{*} - \| H^{0}_{\tcal_{0}} \|_{*}, \label{nuc separation}
\end{align}
which yields the following inequality:
\begin{eqnarray}\label{H*}
\|\Hbar \|_{*} \leq \|H^{0}_{\tcal_{0}}\|_{*} + 2 \|\overline X_{r,k}\|_{*}.
\end {eqnarray}
Note that equality \eqref{nuc separation} holds because both row and column spaces of $X_{r,k}$ and $\Hbar$ are orthogonal to each other. On the other hand, as a result of the minimization \eqref{P1-1} we have
\begin{align}
\|X\|_{2,1} \geq&\sp  \| X+H \|_{2,1} \nonumber \\
=&\sp \| X_{\tcal_{0}} + H_{\tcal_{0}} \|_{2,1} + \| X_{\tcal^{c}_{0}} + H_{\tcal^{c}_{0}} \|_{2,1} \nonumber \\
\geq&\sp \| X_{\tcal_{0}}\|_{2,1} - \|H_{\tcal_{0}} \|_{2,1} - \| X_{\tcal^{c}_{0}} \|_{2,1} + \| H_{\tcal^{c}_{0}} \|_{2,1}. \nonumber
\end{align}
Consequently, we can deduce:
\begin{eqnarray} \label{H21}
\| H_{\tcal_{0}^{c}} \|_{2,1} \leq \| H_{\tcal_{0}} \|_{2,1} + 2\| X_{\tcal_{0}^{c}} \|_{2,1}.
\end{eqnarray}
Note that $H_{\tcal_{0}^{c}} =\Hbar_{\tcal_{0}^{c}}$. Now, dividing both sides of \eqref{H21} by $\sqrt k$ and using \eqref{mix-fro} yields: 
\begin{align}
\frac{1} {\sqrt k}  \| \Hbar_{\tcal_{0}^{c}}  \|_{2,1}  \leq&\sp \| H_{\tcal_{0}} \|_{F}  +  2e  \nonumber \\
\leq&\sp  \| H^{0}_{\tcal_{0}} \|_{F} + \| \Hbar_{0,1}\|_{F} +  \sum_{j\geq2} \| \Hbar_{0,j}\|_{F}  +2e \nonumber \\
\leq&\sp  \| H^{0}_{\tcal_{0}} \|_{F} + \| \Hbar_{0,1}\|_{F} +   \frac{1}{\sqrt{2r}} \| \Hbar \|_{*}  +2e \label{ineq2}\\
\leq&\sp  2\| H^{0}_{\tcal_{0}} \|_{F} + \| \Hbar_{0,1}\|_{F} +  2e'+2e, \label{ineq3}
\end{align}
where $e=\frac { \| X_{\tcal_{0}^{c}} \|_{2,1}} {\sqrt{k}}$ and $e'= \frac{\| \overline X_{r,k} \|_{*}}{\sqrt{2r}}$.
Inequality \eqref{ineq2} is the consequence of  \eqref{lF-l*}, and for deriving \eqref{ineq3} we use \eqref{H*} and \eqref{norms ineq}. 

Using \eqref{ineq3} and \eqref{H*}, we can continue to upper bound the summation of the Frobenius norms in \eqref{sumFro} as follows:
\begin{align}
\mathcal{S}\leq&\sp   3\| H^{0}_{\tcal_{0}} \|_{F} + \| \Hbar_{0,1}\|_{F} +  4e'+2e.\nonumber \\
\leq&\sp 3\sqrt{2} \| H^{0}_{\tcal_{0}}+ \Hbar_{0,1}\|_{F}+ e'', \label{sumFro2}
\end{align}
where $e'' = 4 e' + 2 e$. Note that for the last inequality we use 
\[\|H^{0}_{\tcal_{0}}\|_{F} +\|\Hbar_{0,1}\|_{F} \leq \sqrt2 \|H^{0}_{\tcal_{0}}+\Hbar_{0,1}\|_{F},\]
which holds because $H^{0}_{\tcal_{0}}$ and $\Hbar_{0,1}$ are orthogonal to each other. 
\newline

Meanwhile, by such specific error decomposition we have 
\[\rank(H^{0}_{\tcal_{0}} + \Hbar_{0,1}+ \Hbar_{1,1}) \leq 6r,\] 
\[\supp(H^{0}_{\tcal_{0}} +  \Hbar_{0,1}+\Hbar_{1,1})\leq 2k,\]
and thus according to Definition \ref{def: RIP} we can write
\if@twocolumn
\begin{align}
&(1- \delta_{6r,2k}) \| H^{0}_{\tcal_{0}} + \Hbar_{0,1}+ \Hbar_{1,1} \|_{F}^{2} \nonumber\\
& \quad\leq  \| \acal(H^{0}_{\tcal_{0}} + \Hbar_{0,1}+ \Hbar_{1,1})\|_{\ell_2}^{2}. \label{RIP-error}
\end{align}
\else
\begin{align}
(1- \delta_{6r,2k}) \| H^{0}_{\tcal_{0}} + \Hbar_{0,1}+ \Hbar_{1,1} \|_{F}^{2} 
 \leq  \| \acal(H^{0}_{\tcal_{0}} + \Hbar_{0,1}+ \Hbar_{1,1})\|_{\ell_2}^{2}. \label{RIP-error}
\end{align}
\fi
On the other hand,
\if@twocolumn
\begin{align}
&\| \acal(H^{0}_{\tcal_{0}} + \Hbar_{0,1}+ \Hbar_{1,1})\|_{\ell_2}^{2} \nonumber\\
&\quad = \Big| \langle \acal(H^{0}_{\tcal_{0}} + \Hbar_{0,1}+ \Hbar_{1,1}), \acal(H - \widetilde H ) \rangle \Big|, \nonumber
\end{align}
\else
\begin{align}
\| \acal(H^{0}_{\tcal_{0}} + \Hbar_{0,1}+ \Hbar_{1,1})\|_{\ell_2}^{2} 
 = \Big| \langle \acal(H^{0}_{\tcal_{0}} + \Hbar_{0,1}+ \Hbar_{1,1}), \acal(H - \widetilde H ) \rangle \Big|, \nonumber
\end{align}
\fi
where,
\[\widetilde H = \sum_{ \substack{ i\geq 0, j\geq 1\\ (i,j) \neq (1,1),(0,1)}}  \Hbar_{i,j}.\]
Using the Cauchy-Schwarz inequality and \eqref{epsilon} yields:
\begin{align}
&\Big| \langle \acal(H^{0}_{\tcal_{0}}+ \Hbar_{0,1} +\Hbar_{1,1}), \acal(H ) \rangle  \Big|  \nonumber\\
&\quad \leq \| \acal(H^{0}_{\tcal_{0}}+\Hbar_{0,1}+\Hbar_{1,1})\|_{\ell_2} \, \| \acal(H)\|_{\ell_2 }\nonumber \\
&\quad \leq  2\varepsilon \, \sqrt{1+\delta_{6r,2k}} \, \| H^{0}_{\tcal_{0}}+\Hbar_{0,1}+\Hbar_{1,1}\|_{F}.
\label{leftineq}
\end{align}
Moreover, because $H^{0}_{\tcal_{0}}+\Hbar_{0,1}$ is at most a rank $4r$, $k$-joint-sparse matrix orthogonal to the summation terms in $\widetilde H$, and so is $\Hbar_{1,1}$ (a rank $2r$ and $k$-joint-sparse matrix), we can use Lemma \ref{RIPlemma} to write
\if@twocolumn
\begin{align}
&\Big| \langle \acal(H^{0}_{\tcal_{0}}+\Hbar_{0,1}+\Hbar_{1,1}), \acal( \widetilde H ) \rangle  \Big| \nonumber\\
&\quad \leq \Big(\delta_{6r,2k} \|H^{0}_{\tcal_{0}} + \Hbar_{0,1}\|_{F} +\delta_{4r,2k} \| \Hbar_{1,1}\|_{F}\Big) \nonumber \\
&\quad \quad \times \Big(\sum_{ \substack{ i\geq 0, j\geq 1\\ (i,j) \neq (1,1),(0,1)}} \| \Hbar_{i,j} \|_{F} \Big) \nonumber \\
&\quad \leq\sqrt2 \,\delta_{6r,2k} \|H^{0}_{\tcal_{0}}+\Hbar_{0,1}+\Hbar_{1,1}\|_{F} \times \mathcal{S} \nonumber \\
&\quad \leq 6\, \delta_{6r,2k} \|H^{0}_{\tcal_{0}} + \Hbar_{0,1}+\Hbar_{1,1} \|^{2}_{F}  \nonumber\\
&\quad \quad + \sqrt2 \,e'' \,\delta_{6r,2k} \|H^{0}_{\tcal_{0}}+\Hbar_{0,1}+\Hbar_{1,1}\|_{F}. \label{rightineq}
\end{align}
\else
\begin{align}
\Big| \langle \acal(H^{0}_{\tcal_{0}}+\Hbar_{0,1}+\Hbar_{1,1}), \acal( \widetilde H ) \rangle  \Big| 
& \leq \Big(\delta_{6r,2k} \|H^{0}_{\tcal_{0}} + \Hbar_{0,1}\|_{F} +\delta_{4r,2k} \| \Hbar_{1,1}\|_{F}\Big)  \times \Big(\sum_{ \substack{ i\geq 0, j\geq 1\\ (i,j) \neq (1,1),(0,1)}} \| \Hbar_{i,j} \|_{F} \Big) \nonumber \\
& \leq\sqrt2 \,\delta_{6r,2k} \|H^{0}_{\tcal_{0}}+\Hbar_{0,1}+\Hbar_{1,1}\|_{F} \times \mathcal{S} \nonumber \\
& \leq 6\, \delta_{6r,2k} \|H^{0}_{\tcal_{0}} + \Hbar_{0,1}+\Hbar_{1,1} \|^{2}_{F}  
+ \sqrt2 \,e'' \,\delta_{6r,2k} \|H^{0}_{\tcal_{0}}+\Hbar_{0,1}+\Hbar_{1,1}\|_{F}. \label{rightineq}
\end{align}
\fi
The second inequality is the outcome of the orthogonality of $H^{0}_{\tcal_{0}}+\Hbar_{0,1}$ and $\Hbar_{1,1}$, and replacing the sum of the Frobenius norms by its upper bound \eqref{sumFro2} gives the last inequality. Inequalities \eqref{leftineq} and \eqref{rightineq} are providing an upper bound on $\| \acal(H^{0}_{\tcal_{0}}+\Hbar_{0,1}+\Hbar_{1,1})\|_{\ell_{2}}^{2}$:
\if@twocolumn
\begin{align}
&\| \acal(H^{0}_{\tcal_{0}}+\Hbar_{0,1}+\Hbar_{1,1})\|_{\ell_2}^{2} \nonumber\\
&\quad \leq 6\, \delta_{6r,2k} \|H^{0}_{\tcal_{0}} + \Hbar_{0,1}+\Hbar_{1,1} \|^{2}_{F}\nonumber \\
 & \quad+ \Big(\sqrt2\, e'' \,\delta_{6r,2k} +2\varepsilon \, \sqrt{1+\delta_{4r,2k}} \Big) \|H^{0}_{\tcal_{0}} + \Hbar_{0,1}+\Hbar_{1,1} \|_{F}.  \nonumber
\end{align}
\else
\begin{align}
\| \acal(H^{0}_{\tcal_{0}}+\Hbar_{0,1}+\Hbar_{1,1})\|_{\ell_2}^{2} 
&\quad \leq 6\, \delta_{6r,2k} \|H^{0}_{\tcal_{0}} + \Hbar_{0,1}+\Hbar_{1,1} \|^{2}_{F}\nonumber \\
 & \qquad + \Big(\sqrt2\, e'' \,\delta_{6r,2k} +2\varepsilon \, \sqrt{1+\delta_{4r,2k}} \Big) \|H^{0}_{\tcal_{0}} + \Hbar_{0,1}+\Hbar_{1,1} \|_{F}.  \nonumber
\end{align}
\fi
Now by using the lower bound derived in \eqref{RIP-error}, we can deduce the following inequality:
\begin{eqnarray}
\|H^{0}_{\tcal_{0}} + \Hbar_{0,1}+\Hbar_{1,1} \|_{F} \leq \alpha e'' + \beta \epsilon, \label{ineq1a}
\end{eqnarray}
where, \[\alpha= \sqrt{2} \frac{ \delta_{6r,2k} } {1-7\delta_{6r,2k} } \] and \[\beta = 2 \frac{ \sqrt{ 1+\delta_{6r,2k}} } {1-7\delta_{6r,2k} }.\] 
\newline
Finally, we upper bound the whole error term $H$ as follows:
\if@twocolumn
\begin{align}
\|H\|_{F} \leq&\sp \|H^{0}_{\tcal_{0}} +\Hbar_{0,1}+ \Hbar_{1,1} \|_{F} + \mathcal{S} \nonumber \\
\nonumber\\
\leq&\sp (3\sqrt2 +1)\|H^{0}_{\tcal_{0}} + \Hbar_{0,1}+\Hbar_{1,1} \|_{F} + e'' \label{final-1} \\
\nonumber\\
\leq&\sp \kappa'_{0} \left(     \frac { \| X_{\tcal_{0}^{c}} \|_{2,1}} {\sqrt{k}} + \frac{\| \overline X_{r,k} \|_{*}}{\sqrt{2r}}   \right) + \kappa'_{1} \varepsilon. \label{final-2}
\end{align}
\else
\begin{align}
\|H\|_{F} \leq&\sp \|H^{0}_{\tcal_{0}} +\Hbar_{0,1}+ \Hbar_{1,1} \|_{F} + \mathcal{S} \nonumber \\
\leq&\sp (3\sqrt2 +1)\|H^{0}_{\tcal_{0}} + \Hbar_{0,1}+\Hbar_{1,1} \|_{F} + e'' \label{final-1} \\
\leq&\sp \kappa'_{0} \left(     \frac { \| X_{\tcal_{0}^{c}} \|_{2,1}} {\sqrt{k}} + \frac{\| \overline X_{r,k} \|_{*}}{\sqrt{2r}}   \right) + \kappa'_{1} \varepsilon. \label{final-2}
\end{align}
\fi

Inequality \eqref{final-1} uses the bound derived in \eqref{sumFro2}. Replacing $\|H^{0}_{\tcal_{0}} +\Hbar_{0,1}+ \Hbar_{1,1} \|_{F}$ by its upper bound in \eqref{ineq1a} results in the last line where the constants $\kappa'_{0} = 4+4(3\sqrt2+1)\alpha$ and $\kappa'_{1}=(3\sqrt2+1)\beta$ are fixed for a given RIP constant $\delta_{6r,2k} < 1/7$. This completes the proof of Theorem \ref{main theorem}-a.
\newline

Proof of Theorem \ref{main theorem}-b: we follow the same steps as for Theorem \ref{main theorem}-a. Here, inequalities \eqref{nuc separation} and \eqref{H*} are the outcome of the nuclear norm minimization in \eqref{P1-2} i.e., $\|\ X \|_{*} \geq \| X+H\|_{*}$, and \eqref{H21} holds since $\gamma = \|X\|_{2,1} \geq \|X+H\|_{2,1}$. Thus, one can easily verifies that all bounds \eqref{lF-l*}-\eqref{final-2} are valid and the solution of \eqref{P1-2} obeys the error bound in \eqref{error bound}.
\\

Proof of Theorem \ref{main theorem}-c: it is also similar to Theorem \ref{main theorem}-a and almost all the previous bounds hold. We just need to develop a new bound on $\frac{1}{\sqrt k} \| \Hbar_{\tcal^{c}_{0}} \|_{2,1}  + \frac{1}{\sqrt{2r}}  \| \Hbar \|_{*} $ \ie inequality \eqref{sumFro}. From minimization \eqref{P1-3} we have
\begin{align}
&\|X\|_{2,1}+\lambda \|X\|_{*} \nonumber \\
& \quad \geq \|X+H\|_{2,1}+\lambda \|X+H\|_{*} \nonumber\\
&\quad \geq \|X_{\tcal_{0}}\|_{2,1} - \|H_{\tcal_{0}}\|_{2,1} - \| X_{\tcal_{0}^{c}}\|_{2,1} + \| H_{\tcal_{0}^{c}}\|_{2,1}  \nonumber \\
&\quad\quad+ \lambda \left(   \| X_{r,k} \|_{*} + \| \Hbar  \|_{*}   - \| \overline X_{r,k} \|_{*} - \| H^{0}_{\tcal_{0}} \|_{*} \right). \nonumber
\end{align}
Thus, we can write the following bound:
\if@twocolumn
\begin{align}
\|H_{\tcal_{0}^{c}}\|_{2,1}  + \lambda \|\Hbar\|_{*} \leq&\sp  \|H_{\tcal_{0}}\|_{2,1} +   2 \| X_{\tcal_{0}^{c}}\|_{2,1} \nonumber \\ 
&+ \lambda \| H^{0}_{\tcal_{0}} \|_{*} +  2\lambda  \| \overline X_{r,k} \|_{*}. \nonumber
\end{align}
\else
\begin{align}
\|H_{\tcal_{0}^{c}}\|_{2,1}  + \lambda \|\Hbar\|_{*} \leq  \|H_{\tcal_{0}}\|_{2,1} +   2 \| X_{\tcal_{0}^{c}}\|_{2,1} + \lambda \| H^{0}_{\tcal_{0}} \|_{*} +  2\lambda  \| \overline X_{r,k} \|_{*}. \nonumber
\end{align}
\fi
Dividing both sides by $\sqrt k$ yields: 
\if@twocolumn
\begin{align}
&\frac{1}{\sqrt k} \| \Hbar_{\tcal_{0}^{c}} \|_{2,1}  + \frac{\lambda}{\sqrt k} \|\Hbar\|_{*}\nonumber \\
&\quad \leq \| H_{\tcal_{0}}\|_{F} + \lambda \sqrt \frac{2r}{k}  \|H^{0}_{\tcal_{0}}\|_{F} + 
 2e+2 \lambda \sqrt \frac{2r}{k} e' \nonumber \\
&\quad \leq \left(2+\lambda \sqrt \frac{2r}{k}\right)  \|H^{0}_{\tcal_{0}}\|_{F} \nonumber \\
&\quad \quad + \| \Hbar_{0,1}\|_{F}   +2e+2 \left(1+\lambda \sqrt \frac{2r}{k} \right) e'. \nonumber
\end{align}
\else
\begin{align}
\frac{1}{\sqrt k} \| \Hbar_{\tcal_{0}^{c}} \|_{2,1}  + \frac{\lambda}{\sqrt k} \|\Hbar\|_{*}
& \leq \| H_{\tcal_{0}}\|_{F} + \lambda \sqrt \frac{2r}{k}  \|H^{0}_{\tcal_{0}}\|_{F} + 
 2e+2 \lambda \sqrt \frac{2r}{k} e' \nonumber \\
& \leq \left(2+\lambda \sqrt \frac{2r}{k}\right)  \|H^{0}_{\tcal_{0}}\|_{F} 
 + \| \Hbar_{0,1}\|_{F}   +2e+2 \left(1+\lambda \sqrt \frac{2r}{k} \right) e'. \nonumber
\end{align}
\fi
Note that the last inequality uses the bound \[ \| H_{\tcal_{0}}\|_{F}  \leq 2\| H^{0}_{\tcal_{0}} \|_{F} + \| \Hbar_{0,1}\|_{F} +  2e',\] which has been previously  developed for obtaining \eqref{ineq3}.
Now, by letting $\lambda = \sqrt{\frac{k}{2r}}$ we achieve the following upper bound:
\if@twocolumn
\begin{align}
 \frac{1}{\sqrt k} \| \Hbar_{\tcal^{c}_{0}} \|_{2,1}  + \frac{1}{\sqrt{2r}}  \| \Hbar \|_{*}  \leq&\sp   3\|H^{0}_{\tcal_{0}}\|_{F} +  \| \Hbar_{0,1} \|_{F} \nonumber\\
 &+ 4e'+2e,\nonumber 
\end{align}
\else
\begin{eqnarray}
 \frac{1}{\sqrt k} \| \Hbar_{\tcal^{c}_{0}} \|_{2,1}  + \frac{1}{\sqrt{2r}}  \| \Hbar \|_{*} \leq    3\|H^{0}_{\tcal_{0}}\|_{F} +  \| \Hbar_{0,1} \|_{F} + 4e'+2e,\nonumber 
\end{eqnarray}
\fi
which gives the same bound as in \eqref{sumFro2} for sum of the Frobenius norms. All other steps and inequalities that are derived for the proof of Theorem \ref{main theorem}-a are valid as well for Theorem \ref{main theorem}-c. Therefore, solving \eqref{P1-3} with $\lambda=\sqrt{\frac{k}{2r}}$ leads to the same error bound as in \eqref{error bound}.

\section{The Parallel Proximal Algorithm}\label{sec: PPXA}

Problems \eqref{P1-1}-\eqref{P1-3} are nonlinearly constrained non-differentiable convex problems. There are numerous methods for solving problems of this kind. We discuss here our approach based on the Parallel Proximal Algorithm (PPXA) proposed in \cite{proximal-splitting}. A remarkable advantage of using this method is the ability of having a parallel implementation. 
In a nutshell, PPXA is an iterative method for minimizing an arbitrary finite sum of lower semi-continuous (l.s.c.) convex functions, and each iteration consists of computing the \emph{proximity} operators of all functions (which can be done in parallel), averaging their results and updating the solution until the convergence. The proximity operator of a l.s.c. function $f(x) :\Rbb^{n} \rightarrow \Rbb$ is defined as $\prox_{f}: \Rbb^{n}\rightarrow  \Rbb^{n}$: 
\begin{eqnarray}\label{prox}
\underset{\widetilde x \in  \Rbb^{n}} {\argmin} f (\widetilde x) + \frac{1}{2}\|x - \widetilde x \|_{\ell_{2}}^{2},
\end{eqnarray}
which may have an analytical expression or can be solved iteratively. 

In this part we rewrite each of the problems \eqref{P1-1}-\eqref{P1-2} as a minimization of the sum of three l.s.c. convex functions 
\begin{eqnarray}\label{P1-sum}
\argmin_{X\in \Rbb^{\na \times \nb}} f_{1}(X) + f_{2}(X) + f_{3}(X).
\end{eqnarray}
The template of the PPXA algorithm that solves this minimization is demonstrated in Algorithm \ref{PPXA}.
In the following we define those functions $f_{i}$ and we derive their corresponding proximity operators. 
Note that because $f_{i}$ is defined on $\Rbb^{\na \times \nb}$, the definition of the proximity operator in \eqref{prox} is naturally extended for matrices by replacing  the $\ell_{2}$ norm with  the Frobenius norm.  Finally, $\beta >0$ (in Algorithm \ref{PPXA}) is a  parameter  controling the speed of convergence.

\begin{algorithm}  [t!]        \label{PPXA}           
\SetAlgoLined \KwIn{ $y$, $\acal$, the regularization parameters ($\varepsilon$ and $\tau$, $\gamma$ or $\lambda$), the convergence parameter $\beta > 0$.} 
\textbf{Initializations:} \\
$n=0$, $X_{0}=\Gamma_{1,0}=\Gamma_{2,0}=\Gamma_{3,0} \in \Rbb^{n_{1}\times n_{2}}$\\

\Repeat
{convergence}{
\For{$(i =1: 3)$ }
{$P_{i,n} = \prox_{3\beta f_{i} }(\Gamma_{i,n})$\\}
$X_{n+1} =(P_{1,n}+P_{2,n}+P_{3,n})/3$\\ 
\For{$(i =1: 3)$ }
{$\Gamma_{i,n+1} = \Gamma_{i,n} + 2X_{n+1} - X_{n} - P_{i,n}$\\}
}
\caption{The Parallel Proximal Algorithm to solve \eqref{P1-1}-\eqref{P1-3}} 
\end{algorithm}

\subsection{Constrained $\lm$ Norm Minimization}\label{sec:LRstJS PPXA}

We can rewrite the minimization problem \eqref{P1-1} in the form \eqref{P1-sum} where   
\[
\text{$f_{1}(X) = \|X\|_{2,1}$ , $f_{2} (X)= i_{\mathcal{B}_{\ell_2}}(X)$, and $f_{3}(X) =  i_{\mathcal{B}_{*}}(X)$. }
\]
The convex sets ${\mathcal{B}_{\ell_2}}, {\mathcal{B}_{*}}  \subset \Rbb^{n_{1}\times n_{2}}$ correspond to the matrices satisfying the constraints $\left\| y - \acal( X) \right\|_{\ell_{2}} \leq \varepsilon$ and $\|X\|_{*} \leq \tau$, respectively. The indicator function $i_{\mathcal{C}}$ of a convex set $\mathcal{C}$ is defined as follows:

\begin{equation}
i_{\mathcal C}(X) = \left\{ 
  \begin{array}{l l}
    0 & \quad \text{if}\,\,X \in {\mathcal C} \nonumber\\\\
    +\infty &  \quad \text{otherwise}\\
  \end{array} \right.
\end{equation}

Standard calculations show that $\forall i \in\{1,\ldots,\na\}$
\begin{equation} \label{prox-lm}
\Big(\prox_{\alpha{f_{1}}}(X) \Big)_{i,.} =    \frac{  \big(\|X_{i,.}\|_{\ell_2}-\alpha\big)_{+}  }{ \|X_{{i,.}}\|_{\ell_2} } \, { X_{i,.} }
\end{equation}
 which is the \emph{soft thresholding} operator applied on the rows of $X$ to shrink their $\ell_{2}$ norms. By definition, the proximal operator of an indicator function $i_{\mathcal C}(X)$ is the  the orthogonal projection of $X$ onto the corresponding convex set $\mathcal C$. Thus,  $\prox_{\alpha f_{2}}(X)$ and $\prox_{\alpha f_{3}}(X)$ are respectively the orthogonal projection of $X$ onto the sets ${\mathcal{B}_{\ell_{2}}}$ and ${\mathcal{B}_{*}}$.  For a general implicit operator $\acal$, the projector onto ${\mathcal{B}_{\ell_{2}}}$ can be computed using an iterative \emph{forward-backward} scheme as proposed in \cite{FadiliS09}. However, once $\acal$ corresponds to a tight frame (\ie $\forall y\in \Rbb^{m}$ $\acal(\acal^{*}(y)) = \nu\, y$ for a constant $\nu$) then according to the \emph{semi-orthogonal linear transform} property of
proximity operators \cite{proximal-splitting}, the orthogonal projection onto ${\mathcal{B}_{\ell_{2}}}$ has the following explicit formulation:
 \begin{eqnarray}\label{prox-l2}
\textit{prox}_{\alpha f_{2}}(X)  =
 X + \nu^{-1}\acal^{*} (\mathbf r)\Big(1- \varepsilon{  \| \mathbf r \|^{-1}_{\ell_{2}} } \Big)_{+} 
\end{eqnarray}
where, $\mathbf{r} = y - \acal(X)$.

Finally, let $X = U \Sigma V^{T}$ be the  singular value decomposition of $X$ with $\Sigma = \text{diag}(\sigma_{1}, \sigma_{2} \ldots)$, then the projection of $X$ onto $\mathcal {B}_{*}$ corresponds to the projection of its singular values onto the following positive simplex ($\ell_{1}$ ball):
\[
\mathcal{B}_{\ell_{1}} = \{ \sigma_{1}, \sigma_{2}, \ldots,  \in \Rbb_{+} : \sum_{i} \sigma_{i} \leq \tau \}.
\] 
More precisely, we have
\[
\prox_{\alpha f_{3}}(X) = U\diag(\overline \sigma_{1} ,\ldots, \overline \sigma_{r}) V^{T},
\]
where $[\overline \sigma_{1}, \overline \sigma_{2}, \ldots] = \mathcal{P}_{\mathcal{B}_{\ell_{1}}} ([\sigma_{1}, \sigma_{2}, \ldots ] )$ and $\mathcal{P}_{\mathcal{B}_{\ell_{1}}}$
 denotes the orthogonal projection of the singular values onto the set $\mathcal{B}_{\ell_{1}}$. This projection can be computed within few low-complex iterations using the $\ell_{1}$ projection Matlab implementation of the SPGL1 software package \cite{spgl1}.


\subsection{Constrained Nuclear Norm Minimization}\label{sec:JSstLR PPXA}

We rewrite the minimization \eqref{P1-2} as \eqref{P1-sum} where   
\[
\text{$f_{1}(X) = \|X\|_{*}$ , $f_{2} (X)= i_{\mathcal{B}_{\ell_2}}(X)$, and $f_{3}(X) =  i_{\mathcal{B}_{\lm}}(X)$. }
\]
The function $f_{2}$ and its proximity operator are the same as in the previous part. The convex set ${\mathcal{B}_{\lm}}$ corresponds to the matrices satisfying  $\|X\|_{2,1} \leq \gamma$, and $\prox_{\alpha f_{3}}(X)$ is the orthogonal projection of $X$ onto ${\mathcal{B}_{\lm}}$.\footnote{The Matlab code to perform such projection can be found in the SPGL1 package and is a natural extension of the vector projection onto the $\ell_{1}$ ball.}

%

The proximity operator of the nuclear norm is the \emph{soft thresholding} shrinkage operator applying on the singular values of $X$:
\begin{eqnarray}\label{prox*}
\prox_{\alpha f_{1}}(X) = U\diag(\overline \sigma_{1} , \overline \sigma_{2} , \ldots) V^{T} 
\end{eqnarray}
where, $\forall i$ $
\overline \sigma_{i} =
    \big(\sigma_{i} - \alpha \big)_{+}$.


\subsection{Constrained, Nuclear norm plus $\lm$ Norm Minimization}\label{sec:LS+JS PPXA}

Here, we rephrase problem \eqref{P1-3} as \eqref{P1-sum} where the functions in the summation are 
\[
\text{$f_{1}(X) =  \|X\|_{2,1}$ , $f_{2} (X)= i_{\mathcal{B}_{\ell_2}}(X)$, and $f_{3}(X) = \lambda \|X\|_{*}$. }
\]
According to our previous descriptions, $\prox_{\alpha f_{2}}$ is the orthogonal projection onto the set ${\mathcal{B}_{\ell_2}}(X)$, $\prox_{\alpha f_{3}}$ is the singular value soft thresholding as in \eqref{prox*} and $\prox_{\alpha f_{1}}$ is equivalent to the soft thresholding of the $\ell_2$ norms of the rows of $X$ as in \eqref{prox-lm}. 


Note that, for all three above-mentioned implementations using the "economized" singular value decomposition is very useful as the data matrix is assumed to be low-rank and thus it avoids a huge computation at each iteration (particularly for a high-dimensional data).\footnote{The "economized" singular value decomposition of a rank $r$ matrix, here, refers to computing only the $r$ largest (\ie nonzero) singular values.} Since the rank $r$ of the solution at each iteration is unknown, at the first iteration of Algorithm \ref{PPXA}, $r$ is initially set to some value and later, incremented by doubling its value until achieving the zero singular value in \ref{sec:LRstJS PPXA}, or the singular value soft thresholding parameter $\alpha$ in \ref{sec:JSstLR PPXA} and \ref{sec:LS+JS PPXA}. For the next iterations, $r$ is initialized by the value of the last iteration and so on. 
This idea has been used in the SVT software package \cite{svt} for the singular values soft thresholding and our experiments indicates that in the early iterations of Algorithm \ref{PPXA} and within at most 2 or 3 sub-iterations, $r$ is set to its correct value, and remains fixed for the forthcoming iterations.

\section{Numerical Experiments}\label{ch2-sec:expe}

In this section we conduct a series of experiments in order to demonstrate the average behavior of the proposed simultaneous low-rank and joint-sparse recovery approach for different problem setups, \ie different matrix sizes $\na\times \nb$, joint-sparsity levels $k$, ranks $r$ and number of CS measurements $m$.  Among the three formulations suggested in this paper, we choose the nuclear norm plus the $\lm$ mixed-norm minimization scheme in \eqref{P1-3} (referred to as $\LRJS$) as it requires the minimum prior knowledge about the underlying data, and the regularization parameter is set to $\lambda=\sqrt{\frac{k}{2r}}$ as suggested by Theorem \ref{main theorem}-c. 
\begin{figure*}[t]
  \centering \subfigure[\LRJS, $r=2$]{
    \label{fig1}
      \centering \includegraphics[width=.45 \linewidth, height = 6.2cm]{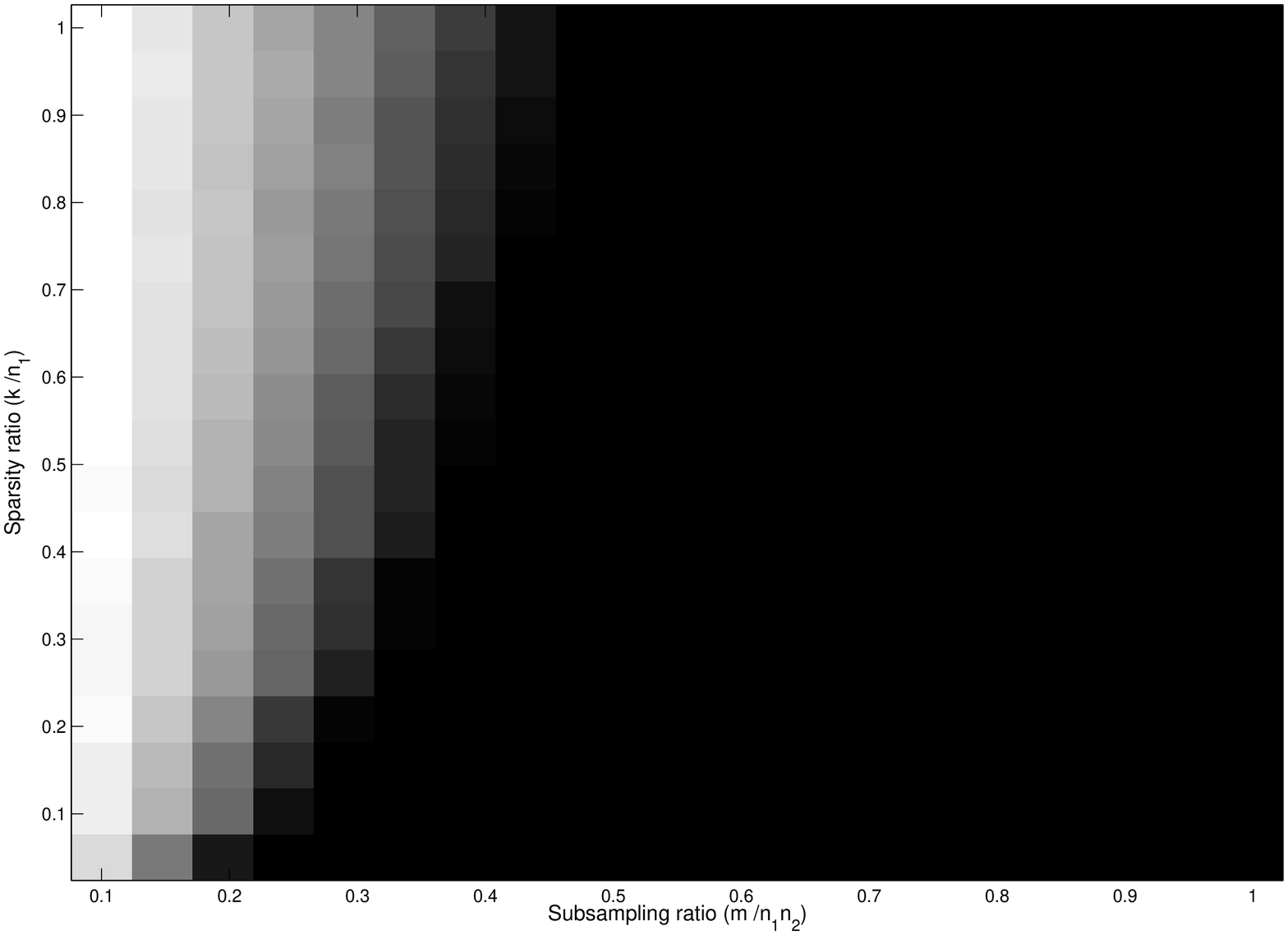}}
    \hfil
     \centering  \subfigure[$\lm$ minimization, $r=2$]{
    \label{fig2}
     \centering \includegraphics[width= .45\linewidth, height = 6.2cm]{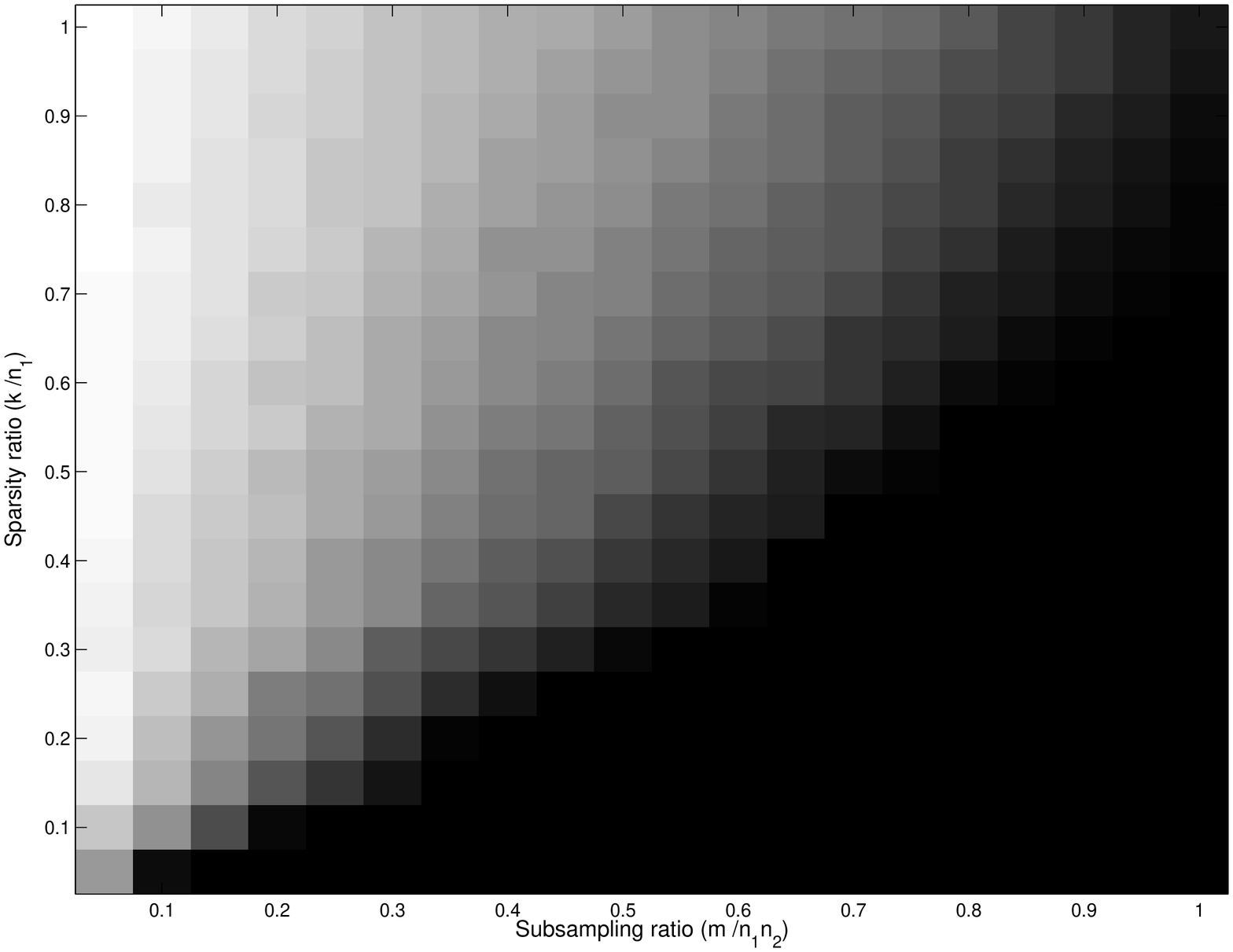}}
        \hfil
     \centering  \subfigure[\LRJS, $r=4$]{
    \label{fig3}
     \centering \includegraphics[width= .45 \linewidth, height = 6.2cm]{Ch3/TIT_PhT_LRJS_r2}}
            \hfil
     \centering  \subfigure[$\lm$ minimization, $r=4$]{
    \label{fig4}
     \centering \includegraphics[width= .45 \linewidth, height = 6.2cm]{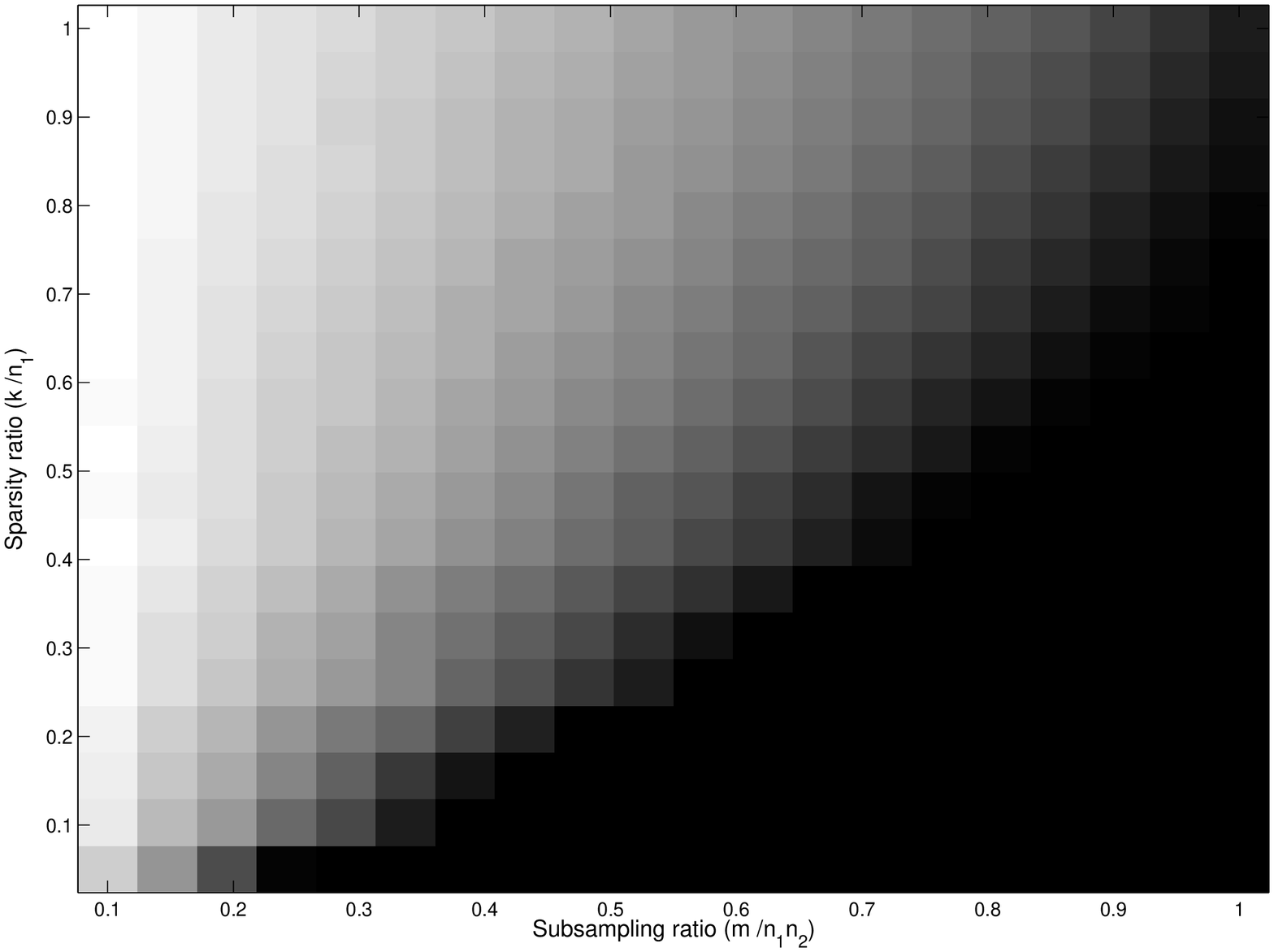}}
  \caption{Reconstruction error phase transitions for $40\times 40$ matrices of rank $r$ acquired by a dense i.i.d. Gaussian sampling matrix: Sparsity ratio $(k/\na)$ vs. Subsampling ratio $(\frac{m}{\na\nb})$.}
\end{figure*}
A rank $r$ and $k$-joint sparse matrix $X\in\Rbb^{\na\times \nb}$ is generated by multiplying two independent i.i.d. Gaussian matrices $X_{1}\in\Rbb^{k\times r}$ and $X_{2}\in\Rbb^{r\times \nb}$, and populating uniformly at random $k$ rows of $X$ by the rows of $X_{1}X_{2}$. The figures that are used for performance evaluation demonstrate the empirical reconstruction error phase transition: for a certain problem setup $(\na, \nb, k, r, m)$, the normalized reconstruction error $\|X-\widehat X\|_{F}/\|X\|_{F}$ has been averaged over twenty independent random realizations of $X$ and $\acal$. Figures are plotted in gray scale and white pixels correspond to high reconstruction errors while black pixels are indicating the recovery regions.

Figure \ref{fig1} demonstrates the performance of our proposed scheme at recovering $40\times 40$ random matrices of rank $r=2$, and for different joint-sparsity ratios ($\frac{k}{\na}$) and subsampling factors ($\frac{m}{\na\nb}$). A dense i.i.d. random Gaussian sampling scheme $\acal$ has been applied for compressed sampling of data. We can observe that for recovering data matrices with high joint-sparsity ratios more CS measurements are required. Meanwhile, it turns out that $\LRJS$ is able to recover even dense (non-sparse) data matrices (\ie, $k=\na$) subsampled by retaining less than $40\%$ of their original sizes. This should not come as a surprise~: despite loosing  sparsity, $\LRJS$ still takes advantage of data correlations due to the low-rank structure, which ultimately results in a successful reconstruction for an oversampling factor nearly four times the degrees of freedom of the original data.

Figure \ref{fig2} illustrates the error phase transition behavior of the $\lm$ minimization \eqref{ch2-l21} for the same setup. As we expect, for such highly structured data, our proposed scheme significantly outperforms the $\lm$ approach by having a much wider recovery region.

Figures \ref{fig3} and \ref{fig4} repeat similar experiments but for rank $r=4$ matrices. The recovery region of the joint \LRJS\ norm minimization is narrowed compared to the Figure \ref{fig1} as more measurements are required to compensate higher degrees of freedom of data in this scenario. However, this method still keeps outperforming the $\lm$ recovery approach.


\begin{figure*}[t!]
  \centering \subfigure[\LRJS, dense sampling]{
    \label{fig5}
      \centering \includegraphics[width= .45 \linewidth, height = 6.2cm]{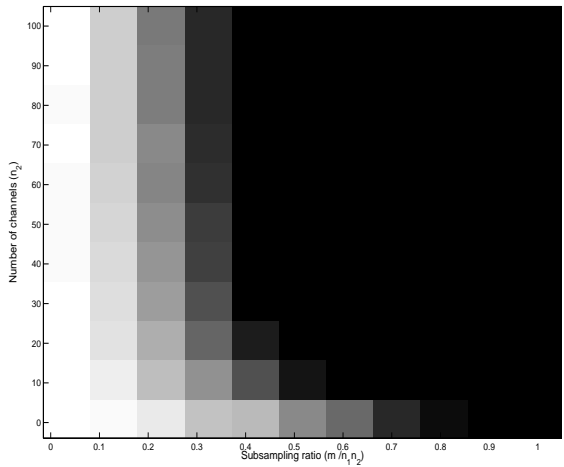}}%
    \hfil
     \centering  \subfigure[\LRJS, distributed independent sampling]{
    \label{fig6}
     \centering \includegraphics[width= .45 \linewidth, height = 6.2cm]{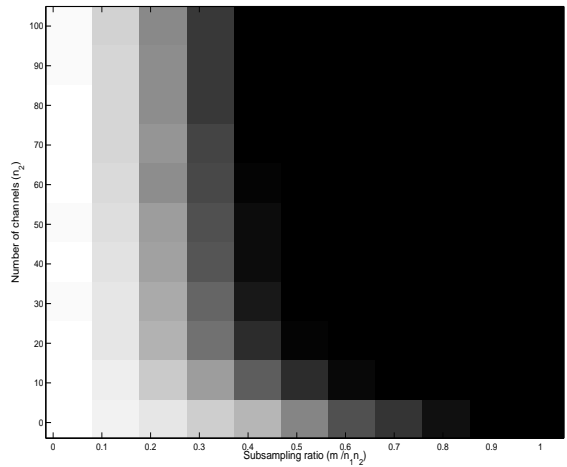}}
        \hfil
     \centering  \subfigure[$\lm$ minimization, dense sampling]{
    \label{fig7}
     \centering \includegraphics[width= .45 \linewidth, height = 6.2cm]{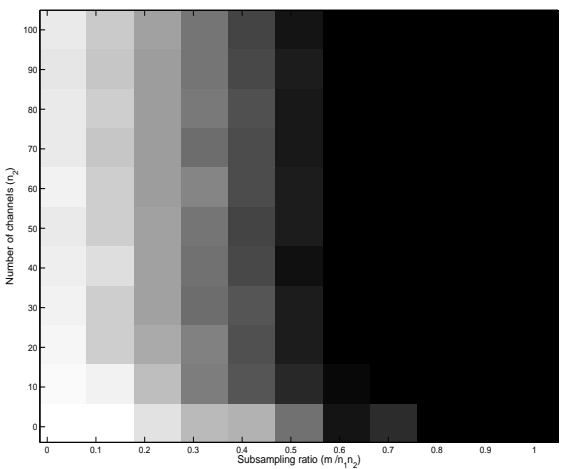}}
            \hfil
     \centering  \subfigure[\LRJS, distributed uniform  sampling]{
    \label{fig8}
     \centering \includegraphics[width= .45 \linewidth, height = 6.2cm]{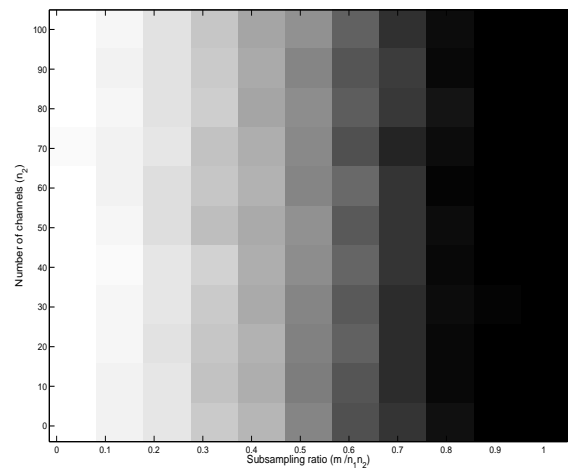}}
  \caption{Reconstruction error phase transitions for $30 \times \nb$ matrices having $r=3$ and $k=10$: Number of columns $(\nb)$ vs. Subsampling ratio $(\frac{m}{\na\nb})$.}
\end{figure*}

\subsection*{Reconstruction Using Distributed Sampling Schemes}
In another experimental setup, we demonstrate the influence of different sampling schemes on our recovery approach.  Besides \emph{dense sampling}, corresponding to the use of dense i.i.d. random Gaussian matrices $A$ in \eqref{sampling model2}, we use the two following schemes particularly suited for distributed sensing applications:
\begin{itemize}
\item \emph{Distributed Uniform Sampling}:
In this case all columns of the data matrix are sampled by a \emph{unique} i.i.d. Gaussian random matrix $\atilde \in \Rbb^{\mhat \times \na}$ and thus the sampling operator $\acal$ corresponds to a block diagonal matrix given below: 
\begin{align}\label{UDS}
{{A}}=\begin{bmatrix}
    \atilde & \mathbf{0}& \ldots
    & \mathbf{0}\\
    \mathbf{0} & \atilde & \ldots
    & \mathbf{0}\\
    \vdots & \vdots & \ddots
    & \vdots \\
    \mathbf{0} & \mathbf{0}& \ldots
    & \atilde
  \end{bmatrix},
\end{align}
where $\mhat = m/\nb$ denotes the number of measurements collected per column.
As previously mentioned in Section \ref{sec: prior-RAJS}, this sensing model is analogous to the multiple measurement vectors (MMV) model.
For multichannel signal applications such as sensor networks where each column of the data matrix represents the observation of a sensor node, this scheme enables all nodes to use a unified mechanism for sampling their own \emph{local} observations in a distributed manner (\ie sampling without intra-channel collaboration). 

\item \emph{Distributed Independent Sampling}:
In this case, an independent random Gaussian matrix $\atilde_{j} \in \Rbb^{\mhat \times \na}$ is used to collect $\mhat$ samples from each column $j$ of the data matrix, and thus the sampling operator corresponds to a block diagonal matrix with distinct blocks:
\begin{align}\label{DBD}
{{A}}=\begin{bmatrix}
    {\atilde_{1}} & \mathbf{0}& \ldots
    & \mathbf{0}\\
    \mathbf{0} & {\atilde_{2}} & \ldots
    & \mathbf{0}\\
    \vdots & \vdots & \ddots
    & \vdots \\
    \mathbf{0} & \mathbf{0}& \ldots
    & {\atilde_{\nb}}
  \end{bmatrix},
\end{align}
where $\mhat = m/\nb$. For multichannel sensing applications with strong intra channel correlations, this  sampling strategy gives the freedom of designing the blocks so that the measurements of different channels carry diverse and non-redundant information to the decoder. 
\end{itemize}


Figures \ref{fig5}-\ref{fig8} demonstrate the reconstruction error phase transitions for various numbers of columns $\nb$ and CS measurements $m$. Data matrices of size $30\times \nb$ with a fixed rank $r=3$ and joint-sparsity level $k=10$ are drawn at random and subsampled by dense, distributed independent or distributed uniform random sampling matrices. These graphs suggest a useful tradeoff for practical multichannel sensing setups, \eg in sensor network applications, in order to design low cost/complex sensors but distribute many of them (high number of channels $\nb$).
As we can see in Figure \ref{fig5}, the recovery region of $\LRJS$ using dense  sampling matrices follows the behavior  predicted by the theoretical analysis (see Corollaries \ref{corollary1} and \ref{main corollary}) \ie,
\[
\frac{m}{\na\,\nb}\gtrsim \O \left( \frac{k\log(\na/k)+kr}{\na\,\nb}+ \frac{r}{\na} \right),
\]
that is a constant plus a decaying term. A very similar recovery performance can be observed in Figure \ref{fig6} using distributed independent sampling. In both cases by increasing $\nb$, $\LRJS$ requires less CS  measurements per column (up to a minimal measurement limit).  

Figure \ref{fig7} demonstrates the performance of the $\lm$ recovery scheme using dense random sampling matrices. Compared to $\LRJS$, the recovery region shrinks and more remarkably for high values of $\nb$, the minimal measurement limit becomes larger \ie more measurements per column is required. 
As a result, for multichannel sensing applications $\LRJS$ achieves better tradeoff between the number of sensor nodes and their complexity compared to $\lm$ recovery approach.


Finally, we repeat the same experiment for the \LRJS\ approach and using distributed uniform random sampling matrices. We can observe in Figure \ref{fig8} that by adding extra channels recovery does not improve much, which is consistent with our early explanations: The uniform sampling is not a proper strategy for CS acquisition of low-rank data matrices because the resulting measurements become highly redundant and they do not carry diverse information across channels. 

%
\section{Conclusion}
In this paper we showed that combining the two key priors in the fields of compressive sampling and matrix completion, namely  joint-sparsity and low-rank structures, can tremendously enhance  CS dimensionality reduction performance. We proposed a joint \LRJS\ convex minimization approach that is able to efficiently exploit such restrictive structure. Our theoretical analysis indicates that the \LRJS\ scheme is robust against measurement noise and non-exact low-rank and joint-sparse data matrices. It can achieve a tight lower bound on the number of CS measurements  improving significantly the state-of-the-art. We provided algorithms to solve this type of minimization problem enabling us to validate these improvements by several numerical experiments.

Extension of this work includes theoretical analysis of the \LRJS\ approach for structured random matrices (\eg, distributed independent random sampling matrices), as well as considering alternative structures such as simultaneous low-rank, sparse (rather than joint-sparse) and positive semidefinite matrices which may find  applications in high-dimensional sparse covariance matrix estimation. We are  also interested in further applications of this approach in multichannel signal compressive sampling.


%

\bibliographystyle{IEEEtran}
\bibliography{Ch4_7}

\end{document}